\newcommand{\blind}{0}
\newcommand{\R}{\mathbb{R}} 
\renewcommand{\P}{\mathbb{P}}
\newcommand{\cA}{\mathcal{A}}
\newcommand{\cZ}{\mathcal{Z}}
\def\dblbr#1{\llbracket #1\rrbracket}
\newcommand{\1}{ \mathds{1} 
} 
\renewcommand{\l}{\ell}
\newcommand{\wh}[1]{{\widehat{#1}}}
\newtheorem{proposition}{Proposition}
\begin{document}

\def\spacingset#1{\renewcommand{\baselinestretch}%
{#1}\small\normalsize} \spacingset{1}



\if0\blind
{
  \title{\bf   Enhancing the Power of Gaussian Graphical Model Inference by Modeling the Graph Structure}
  \author{Valentin Kilian, Tabea Rebafka and Fanny Villers\thanks{
    The authors gratefully acknowledge \textit{please list all relevant funding sources in the unblinded version}} 
        }
        \date{LPSM, Sorbonne Université, Paris, France}
  \maketitle
} \fi

\if1\blind
{
  \bigskip
  \bigskip
  \bigskip
  \begin{center}
    {\LARGE\bf Title}
\end{center}
  \medskip
} \fi

\bigskip
\begin{abstract}
For the problem of inferring a  Gaussian graphical model (GGM), this  work explores the application of a recent   approach from the multiple testing literature for graph inference.  
The main idea of the method by \cite{RRV22}   is to model the data by a latent variable model, the so-called noisy stochastic block model (NSBM),   and then use the associated  $\ell$-values to infer the graph. The inferred graph controls the false discovery rate, that means that the proportion of falsely declared edges does not exceed a user-defined nominal level.
Here it is shown that any test statistic from the GGM literature can be used as input for the NSBM approach to perform  GGM inference. 
To make the approach feasible in practice, a new, computationally   efficient inference algorithm for the NSBM is developed relying on a greedy approach to maximize the integrated complete-data likelihood. 
Then an extensive numerical study illustrates that the NSBM approach outperforms the state of the art for any of the here considered GGM-test statistics. In particular in sparse settings and on real datasets a significant gain in power is observed.

 \end{abstract}

\noindent%
{\it Keywords:}  Gaussian graphical models; noisy stochastic block model; structured $\ell$-values; power enhancement; integrated complete-data likelihood; greedy algorithm  
\vfill

\spacingset{1.75} 

\section{Introduction}\label{sec:intro}
Graphical modeling offers a convenient way to describe complex conditional dependence structures between variables using a graph. 
This is of great interest in  a large variety of domains of application. 
One of the main question is to attempt to discover the links between 
variables from a set of repeating observations of these variables. Gaussian Graphical Models (GGM) provide a suitable framework in this context.
The goal of our work is to propose an approach for GGM inference that 
is both powerful to detect significant links and allows a control of the error rate. More precisely, we aim at providing   a control of the false discovery rate (FDR),  that is we want to guarantee that  the proportion of falsely declared dependencies between variables does not exceed a user-defined nominal level.  
The approach that we propose here is based on a method  for graph inference  developed in   a general statistical framework by \cite{RRV22}. We  make this approach usable in practice for GGM inference by developing a new computationally efficient estimation algorithm for the so-called
 noisy stochastic block model and show that the approach detects more dependencies among variables than    state-of-the-art methods.

\subsection{Inference in Gaussian graphical models}
We consider the setting, where  observations are   independent realizations  of a multivariate  Gaussian distribution. Its precision matrix, that is the inverse of the covariance matrix,    provides information on the conditional correlations of the variables. Notably, two variables are conditionally independent given all other variables if and only if the associated entry of the precision matrix is zero. Now, GGM inference amounts to recover the non zero entries of the precision matrix from the realizations of the Gaussian distribution. The problem can also be viewed as a graph inference problem, where the edges of a binary graph indicate the non zero entries of the precision matrix and this graph is to be inferred from the Gaussian observations.

In the literature there are mainly three different approaches for  GGM inference. 
First, some methods aim at estimating the precision matrix, as done in graphical Lasso \citep{Banerjee08, Friedman} with a penalized  maximum likelihood approach, or in \citep{Clime, Transclime} with a constrained $\ell_1$ minimization method. 
Second,  there are methods  based on neighborhood estimation, as the  Lasso procedure by \cite{MB}, that performs a  conditional regression for every variable against   all other variables. 
  Third,     GGM inference can be done using  multiple-testing procedures based on appropriate estimators of the precision matrix.  
When the sample size is large and the sample covariance matrix  invertible, there are
 natural test statistics for the entries of the precision matrix  \citep{DrtonPerlman2007}. 
However, in the high-dimensional setting,  the sample covariance matrix is typically nonsigular, and there have been several propositions to circumvent this issue. The use of bagging or shrinkage provides more reliable estimates of the covariance matrix or its inverse \citep{SSa, SSb}.
Alternatively, one can  test  the first-order partial correlation, that is,  the conditional correlation of two variables given only another third one \citep{WB}.
More recently, new estimators of  the precision matrix were obtained by improving the original Lasso-regularized estimators 
 \citep{Liu2013, Ren2015, Jankova2018}. 
As these estimates  are asymptotically normal when the precision matrix is sparse, they can serve to infer the graph while controlling the FDR  \citep{Liu2013}. 

\subsection{Graph inference by a multiple testing procedure}
 \cite{RRV22} propose a  multiple testing procedure for graph inference in  the general statistical framework, where 
 a real-valued graph is observed resulting from a perturbation of some underlying binary latent graph, which is to be recovered.
We will show that this approach can be used  for GGM inference, by using 
test statistics from the GGM literature  as input to this method.

 The  procedure by  \cite{RRV22}  consists in modeling   the data by a suitable latent variable model and then  using the
 associated structured $\ell$-values, that is,  the posterior probabilities of the presence of an edge,  to infer a graph such that  the FDR is controlled.  This method is  more powerful than other existing multiple testing procedures, since the $\ell$-value of an entry $(i,j)$ of the graph depends not only on the observed value for $(i,j)$, but also on the learned statistical graph model.

 Data are modeled by the so-called noisy stochastic block model (NSBM), where the underlying binary graph is a classical stochastic block model,  to which the following blurring mechanism is applied:   absent edges are replaced with  pure random noise, and in place of present edges, some effect or signal is observed. 
 For the testing procedure the NSBM has to be fitted to the data, but inference is involved as the model is quite complex. In  \cite{RRV22}  a variational EM-algorithm is proposed, 
but the implemented algorithm is rather slow and not scalable to large graphs. Therefore, it is not  convenient  for GGM inference, where graphs can be huge. In this work, a new inference algorithm is developed according to the approach proposed by \cite{Come15}. 
While the variational EM-algorithm provides an approximation of the maximum likelihood estimate, the aforementioned  method is a Bayesian approach and relies on  a new objective function, the  so-called exact integrated complete-data likelihood (ICL). Its maximization is   a discrete optimization problem, that can be solved by a greedy algorithm. An important advantage of the  algorithm is that it  automatically selects the best model, that is, the best number of blocks in the NSBM.

\subsection{Contributions and organization of the paper}
 	 
In Section \ref{sec:NSBM},  the definition of the NSBM  and the associated multiple testing procedure for graph inference are recalled. 
In Section \ref{sec:ICL} a fast greedy algorithm  is developed to maximize the ICL criterion in the NSBM. 
 Then,   Section \ref{sec:GGM} shows   how to use the NSBM for GGM inference with control of the FDR.  Namely,   any known test statistic from the GGM literature can be used as an input for the NSBM approach.
 Finally, Section \ref{sec:Simus} provides a numerical study to assess the performance of our NSBM approach for GGM inference and a comparison to the state of the art on both simulated data and a real dataset.

\section{Model and graph inference procedure}
\label{sec:NSBM}
In this section we present notations, the model and the graph inference procedure.

\subsection{Model definition}
  In the noisy stochastic block model (NSBM) the observation matrix $X = (X_{i,j})_{i,j}\in\R^{p\times p}$ is  a perturbed version of some underlying binary latent graph represented by its adjacency matrix $A=(A_{i,j})_{i,j}\in\{0,1\}^{p\times p}$, which is modeled by a stochastic block model (SBM). Here only undirected graphs  without loops are considered.  Denote $\cA=\{(i,j): 1\leq i < j\leq p\}$ the set of all possible undirected edges among the   $p$ nodes of the graph. 
 
The nodes are partitioned into $Q$ blocks by a vector $Z=(Z_1,\dots,Z_p)$ which is composed of independent and identically distributed discrete latent variables $Z_i$   with probabilities $\pi_q=\P(Z_1=q)$ for $q\in\dblbr Q$
	for some parameter $\pi=(\pi_1,\dots,\pi_Q)\in[0,1]^Q$ such that $\sum_{q\in\dblbr Q} \pi_q=1$.  
	The entries $A_{i,j}$ of the latent graph $A$ are  independent conditionally on $Z$ with Bernoulli distribution, that is, 
	$$(A_{i,j})_{(i,j)\in \cA}\:|\:Z \sim \bigotimes_{(i,j)\in \cA} \mathcal{B}(w_{Z_i,Z_j}),$$
	for some symmetric parameter matrix $w=(w_{q,l})_{q,l}\in[0,1]^{Q\times Q}$. 
 The blurring mechanism is the following. If there is no edge between nodes $i$ and $j$, i.e. $A_{i,j}=0$, than $X_{i,j}$ is  standard Gaussian noise,  otherwise (when $A_{i,j}=1$) some signal is observed and the signal strength is modulated by the latent variables $Z_i$ and $Z_j$. That is,
 	$$
	(X_{i,j})_{(i,j)\in \cA}\:|\:Z,A \sim
	\bigotimes_{(i,j)\in \cA}   (1-A_{i,j}) \mathcal{N}(0,1) + A_{i,j}  \mathcal N(\mu_{Z_i,Z_j},\sigma^2),
	$$
   	for  an unknown symmetric parameter matrix $\mu=(\mu_{q,l})_{q,l} \in \R^{Q\times Q}$ and known variance  $\sigma^2>0$.   
 The  unknown   model  parameter in   the NSBM is  denoted by $\theta=(\pi,w,\mu)$ 
 and the  distribution and density of $(X,A,Z)$  are denoted by $\P_{\theta}$ and $p_\theta$, respectively.
In the Supplementary Material an extension of the NSBM is presented, where the variance of the Gaussian under the alternative is unknown  and assumed to  depend on the latent variables.

\subsection{Graph inference via multiple testing}
Graph inference in the NSBM can be viewed as the  multiple testing problem, where  for all pairs $(i, j)\in\cA$ we consider the hypotheses of the absence or presence of an edge  between nodes $i$ and $j$ in the latent graph $A$. That is, we test
$H_{0,i,j}: A_{i,j} = 0$ against  $H_{1,i,j}: A_{i,j} \neq 0$.

 \cite{RRV22} propose a powerful testing procedure based on     the posterior probabilities  that the null hypothesis is true, also called $\l$-values. They are given by
\begin{align*}
	\l_{i,j}(X,Z,\theta)&=\P_\theta(A_{i,j}=0\mid X,Z)= \frac{ (1-w_{Z_i,Z_j}) f_{\mathcal N(0,1)}(X_{i,j}) }{(1-w_{Z_i,Z_j})f_{\mathcal N(0,1)}(X_{i,j}) +w_{Z_i,Z_j} f_{\mathcal N(\mu_{Z_i,Z_j},\sigma^2)}(X_{i,j}) },
\end{align*}
where  $f_{\mathcal{N}(m,s^2)}$  denotes the Gaussian density with mean $m$ and variance $s^2$.  The procedure consists in rejecting $H_{0,i,j}$ if  the $\l$-value is sufficiently small, that is
$\l_{i,j}(X, Z; \theta) \le t$ for some threshold $t$, which can be  chosen such that the FDR is controlled. For more details we refer the reader to \cite{RRV22}.

The point of the $\l$-value $\l_{i,j}$ for  the node pair $(i,j)$  is that its value does not only depend  on the observation $X_{i,j}$, but  on the entire NSBM
through the model parameter  $\theta$ and the node clustering $Z$. This results in an important increase of the power compared to the classical Benjamini-Hochberg procedure. \cite{RRV22} also provide results on the optimality of this $\l$-values procedure in terms of FDR and the power. 
The approach is typically  used with estimated model parameters $\hat\theta$ and estimated latent variables $\hat Z$. Therefore, an efficient inference algorithm for the NSBM is required.  
  
 \section{New inference algorithm  in the NSBM }
\label{sec:ICL}

While \cite{RRV22} propose a variational EM-algorithm limited to   networks of very moderate size, in this section a much more efficient and more convenient inference algorithm for the NSBM is developed. The approach relies on an integrated likelihood criterion and comes with a fast greedy algorithm that computes the best clustering $Z$, estimates the model parameter $\theta$ and 
automatically performs model selection, that is the choice of the best number of blocks in the SBM.
 
 \subsection{ICL criterion in the NSBM}
The integrated complete-data log-likelihood (ICL) is defined in a Bayesian framework, where $\boldsymbol\pi_\theta$ denotes a prior on the parameter $\theta$. In the NSBM  the ICL is given by
\begin{align*}
\mathrm{ICL}^\mathrm{NSBM}(X, A, Z)&=	\log p(X, A, Z) = \log \left(\int_{\Theta} p_\theta(X, A, Z) \boldsymbol\pi_\theta(\theta)d\theta \right).
\end{align*}
Notably, the criterion only depends on the observations and the latent variables. 
The value of $Z$ that maximize the ICL can be used to derive an estimator of $\theta$, which in turn will be used to compute the $\ell$-values for the final test procedure.
 
 To compute the maximum it is convenient to have an analytic expression of $\mathrm{ICL}^\mathrm{NSBM}$. 
This is obtained by considering   a factorized prior    of the form $\boldsymbol\pi_\theta(\pi, w, \mu) = \boldsymbol\pi_\pi(\pi)\boldsymbol \pi_w(w) \boldsymbol\pi_\mu(\mu)$ and by using  conjugate priors  for the different parameters. Here we choose   the Dirichlet   distribution for the cluster proportions $\pi$, Beta   distributions for the edge probabilities $w$ and Gaussian   distributions for the means $\mu$. More precisely,
for convenient hyperparameters $n_0, \eta_0, \zeta_0, \rho_0, \tau_0$, we consider
\begin{align}
\boldsymbol \pi_\pi(\pi) &= \mathrm{Dir}(n_0, ..., n_0),\qquad
\boldsymbol \pi_w(w) = \bigotimes_{ q \leq l }  \boldsymbol\pi_{w_{q,l}}(w_{q,l}) = \bigotimes_{  q \leq l } \mathrm{Beta}(\eta_0, \zeta_0),\label{def:prior}\\
\boldsymbol  \pi_\mu(\mu) &= \bigotimes_{ q \leq l }  \boldsymbol\pi_{\mu_{q,l}}(\mu_{q,l}) = \bigotimes_{  q \leq l }\mathcal N(\rho_0, \tau^2_0).\nonumber
\end{align} 

To state the formulae of $\mathrm{ICL}^\mathrm{NSBM}$, we introduce some further notations and useful count statistics. 
Let $N= {p(p-1)}/2$ be the maximal number of possible edges,  
	$M^A=\sum_{(i,j)\in\mathcal A}A_{i,j}$   the total number of edges
	and $N_Q=Q(Q+1)/2$ the number of pairs of blocks.
Let $Z_{i,q} = \1\{Z_i=q\}$ indicate  whether node $i$ is assigned  to block $q$ or not, and $\cZ_{q,l}^{i,j}=\1\{Z_{i,q}Z_{j,l} +Z_{i,l}Z_{j,q}>0\}$   whether nodes $i$ and $j$ belong to blocks  $q$ and $l$. Then,  $n_q=\sum_{i\in\dblbr p} Z_{i,q}$  is the number of nodes associated with block $q$, and $n_{q,l}=\sum_{(i,j) \in \cA} \cZ_{q,l}^{i,j}A_{i,j}$ (resp. 
   $\bar n_{q,l}=\sum_{(i,j) \in \cA} \cZ_{q,l}^{i,j}(1-A_{i,j})$) is the number of present (resp. absent) edges  between nodes in $q$ and $l$. Finally,
   $I_{q,l} = \{(i,j) \in \cA, A_{i,j} = 1, \cZ_{q,l}^{i,j} = 1\}$ is the set of present edges between nodes belonging to blocks  $q$ and $l$ (such that $|I_{q,l}|=n_{q,l}$),
and  
$S_{q,l}$ 
denotes the empirical variance associated with $\{X_{i,j}, (i, j)\in I_{q,l}\}$.

\begin{proposition} 	\label{ICL}
    In the NSBM with the priors defined in \eqref{def:prior}, the ICL  is given by
	$$\mathrm{ICL}^\mathrm{NSBM}(X, A, Z) =    \mathrm{ICL^{SBM}}(A,Z) +    \mathrm{ICL^{noise}}(X,A,Z),$$
	where $\mathrm{ICL^{SBM}}$ is the ICL in the SBM given by
	\begin{align} 
	\mathrm{ICL^{SBM}}(A,Z) &=  \log p(Z)+ \log p(A|Z)\notag\\
	& = \log \frac{C(n_0+n_1,\dots,n_0+n_Q)}{C(n_0,...,n_0)} + \sum_{  q \le l}\log \frac{B(\eta_{0} + n_{q,l},\xi_{0}+\bar n_{q,l})}{B(\eta_0,\xi_0)},\label{ICLSBM} 
	\end{align} 
where  $C(x_1,\dots,x_Q) = \frac{\prod_q \Gamma(x_q)}{\Gamma(\sum_q x_q)}$, $\Gamma(.)$ is the Gamma function and  $B(a,b) = \frac{\Gamma(a) \Gamma(b)}{\Gamma(a+b)}$   the Beta function. Furthermore, $\mathrm{ICL^{noise}}(X,A,Z) = \log p(X | A, Z) $ is the part of the ICL criterion associated with  the noise layer in the NSBM given by
	\begin{align} 
	&\mathrm{ICL^{noise}}(X,A,Z) \label{ICLNSBM}
		 = - \frac{N}{2} \log(2\pi) - \frac12\sum_{(i,j) \in \mathcal{A}} (1-A_{i,j}) X_{i,j}^2
		 - \left(M^A-N_Q\right)\log\sigma
		 \\
		&\quad - \frac{1}{2} \sum_{q \le l }\left\{\log\left(\sigma^2+ \tau_0^2n_{q,l} \right) 
		+   \frac{\tau_0^2n_{q,l}^2 }{\tau_0^2\sigma^2 n_{q,l}  +  \sigma^4} S_{q,l}
		 +  \frac{1}{\sigma^2+\tau_0^2 n_{q,l}} \sum_{(i, j) \in I_{q,l}} (X_{i,j} - \rho_0 )^2 
\right\}. \notag 
	\end{align}
\end{proposition}

 \subsection{Greedy NSBM inference algorithm}
The  maximization of the ICL is a discrete optimization problem and  we propose  a greedy  algorithm for this task, which is in line with the algorithm developed by \cite{Come15}. The algorithm successively picks  a node and searches its best block assignment in terms of the ICL criterion while keeping all other node assignments fixed.
  This amounts to evaluate the variation of the ICL   denoted by $\Delta_{g\rightarrow h}$ due to swapping  the selected node $i^*$ with current block membership $g$ to block $h$, where $\Delta_{g\rightarrow h}$   is 
  defined by
$$\Delta_{g\rightarrow h} = \mathrm{ICL^{NSBM}}(X,Z^\mathrm{swap}, A^\mathrm{swap}) - \mathrm{ICL^{NSBM}}(X,Z, A),$$
where $Z=(Z_1, \ldots, Z_p)$ denotes the current partition of the   nodes with $Z_{i^*}=g$ 
and $Z^\mathrm{swap}$ the block assignments  after   the swap, that is, $Z^\mathrm{swap}_{i^*}=h$ and $Z^\mathrm{swap}_{i }=Z_i$ for  all $i\neq i^*$.
Likewise,   $A$ denotes the current  adjacency matrix   and  $A^\mathrm{swap}$ the one after the swap obtained by thresholding the posterior probabilities of edges involving node $i^*$, that is,  
\begin{equation}\label{UpdateAtest}
A^\mathrm{swap}_{i^*,j} = A^\mathrm{swap}_{ j,i^*} =   \mathds{1}\{\rho^\mathrm{swap}_{i^{*},j} > 0.5\},\quad \forall j\in\dblbr p,
\qquad
A^\mathrm{swap}_{i,j}=A_{i,j},\quad  \forall i\neq i^*,j\neq i^*.
\end{equation}	
where
$		\rho^\mathrm{swap}_{i,j}  = \P_{\hat \theta}(A_{i,j}=1\mid X, Z^\mathrm{swap})$ and current parameter estimate $\hat\theta$ composed of
	\begin{align} 		\label{Updatew}  
		\wh{w}_{q,l} &= \frac{\sum_{i<j} \cZ_{q,l}^{i,j}  \rho_{i,j}}{\sum_{i<j} \cZ_{q,l}^{i,j}}, \quad
		\wh{\mu}_{q,l} &= \frac{\sum_{i<j} \cZ_{q,l}^{i,j} \rho_{i,j} X_{i,j}}{\sum_{i<j} \cZ_{q,l}^{i,j}\rho_{i,j}}, \quad 
		\wh{\sigma^2_{q,l}} &= \frac{\sum_{i<j} \cZ_{q,l}^{i,j} \rho_{i,j} (X_{i,j}-\wh{\mu}_{q,l})^2}{\sum_{i<j} \cZ_{q,l}^{i,j}\rho_{i,j}},
	\end{align}	
where the posterior probabilities $\rho_{i,j}  = \P_{\hat \theta}(A_{i,j}=1\mid X, Z)$ are used as weights to define estimates that have the form of weighted means.  

In the greedy algorithm, as the variation $\Delta_{g\rightarrow h}$ is evaluated many times, its computation   represents a huge part of the computational burden of the algorithm. Thus,  it is important that  the numerical evaluation of $\Delta_{g\rightarrow h}$ is fast. Details on the efficient computation of $\Delta_{g\rightarrow h}$ are provided in the Supplementary Material.

It occurs that the algorithm successively moves all nodes from one block to other blocks resulting in an empty block.  Thus, starting from a large number $Q_\text{init}$ of blocks, the approach automatically performs model selection by emptying blocks and providing a final clustering with an appropriate number of clusters. The  ICL criterion can be considered as a penalized criterion encouraging the use of a parsimonious model.

The greedy ICL algorithm is summarized in    Algorithm \ref{GreedyICL}. The estimates of  the model parameter $\theta$ and the clustering $Z$ are then used to compute the $\ell$-values (obtained in fact directly from the posterior edges probabilities $\rho$) to infer a graph such that the FDR is controlled. The code will soon be available in the R package  \texttt{noisysbmGGM}. 

A numerical comparison of the method with the variational EM algorithm implemented in the package \texttt{noisySBM} showed that both methods provide essentially the same estimates when applied to the same dataset, while the greedy ICL algorithm largely outperforms   EM   in terms of computing time. Furthermore, graphs of larger size can be treated with this new algorithm, which makes it suitable for GGM inference.

\begin{algorithm}[h!]
	\SetAlgoLined
	\SetKwInOut{Input}{Input}
	\SetKwInOut{Output}{Output}
	\Input{Observation  $X$,   initial number of latent blocks $Q_\mathrm{init}$}
	\Output{Clustering $Z$ into $Q$ clusters, estimator $\theta$, edge probabilities $\rho$}
	Initialize $Z$, $\theta$, $\rho = (\rho_{ij})_{i,j \in \cA}$ and $A$; 	
	Set $Q=Q_\mathrm{init}$\;
	\Repeat {{convergence}}{
	 Select a node $ i^{*} \in \dblbr  p$ ; 
	 Set $g \leftarrow Z_{i^{*}}$\;
		\For{$h \ne g$}{
			Calculate $\Delta_{g \rightarrow h}$\; 
		}
		Find $h$ such that  $\Delta_{g \rightarrow h}$ is maximum\;
		\If{ $\Delta_{g \rightarrow h}>0$}{			
			Update clustering $Z$ by setting  $Z_{i^{*}}=h$\;
			\If{block $g$ is empty}{
				Shrink the number of blocks: $Q=Q-1$\;
				}
			Update $\theta$, $\rho = (\rho_{ij})_{i,j \in \cA}$ and $A$ based on the updated clustering $Z$; 
		}
	}
	\caption{Greedy ICL algorithm for the NSBM} \label{GreedyICL}
\end{algorithm}

 \section{GGM inference using the NSBM approach} \label{sec:GGM}
This section describes how to use    the multiple testing approach using the NSBM to infer a GGM.
In particular, different test statistics from the GGM literature are presented, which are convenient for  building the observation matrix $X$ provided to the NSBM algorithm. 

\subsection{GGM test statistics for the NSBM}
Let $\mathbf Y = (Y_{1},  \ldots, Y_{p})^{t}$ be a  centered random vector with   multivariate  Gaussian distribution with   nonsingular covariance matrix $\Sigma$.  
The inverse of the latter,  denoted by $\Omega=\Sigma^{-1}=(\omega_{i,j})_{i,j}$, 
 is called the precision matrix and represents the partial correlations. More precisely,  
 $$\mbox{Corr}(Y_i, Y_j|\mathbf Y_{-(i,j)}) = - \frac{\omega_{i,j}}{\sqrt{\omega_{i,i} \omega_{j,j}}},\quad\text{where } \quad \mathbf Y_{-(i,j)}= \{Y_k, k \notin \{i,j\}\}.$$
The interpretation is that $\omega_{i,j}=0$ if and only if   $Y_{i}$  and $Y_{j}$ are   independent  given    all other variables $\mathbf Y_{-(i,j)}$. In other words, if $\omega_{i,j}\neq0$, then there is a direct   dependency or link between $Y_i$ and $Y_j$ \citep{Lauritzen}.
Now, the aim of GGM inference  is  to detect 
the pairs of random variables  with nonzero partial correlation  on the basis of a sample of  $\mathbf Y$. This amounts to infer the binary graph with adjacency matrix $A$ with $A_{i,j}=\1\{\omega_{i,j}\neq0\}$.

The literature provides various test statistics computed on a sample of size $n$ of $\mathbf Y$ 
for testing whether partial correlations are zero or not. They can be used to directly build a multiple testing procedure, but in this work we use them to build the observation matrix $X$ fed to the greedy NSBM algorithm. The following test statistics are considered:
 \begin{itemize}
\item When the sample size $n$ exceeds the number of variables $p$,   the inverse of the sample covariance matrix is an estimate of the precision matrix. Applying  Fisher's $z$-transformations  to the associated partial correlations  provides
asymptotically normal  test statistics \citep{Anderson}.
\item  \cite{Ren2015}
use  the residuals of  bivariate regressions that model the outcome of two variables $Y_i$ and $Y_j$ as a function of all   other variables to estimate the precision matrix. We refer to this test statistic as \texttt{Ren}. Asymptotic normality is obtained in the sparse setting, where the maximum degree $s$ of the graph satisfies $s = o(\sqrt{n}/\log p)$.
\item \cite{Jankova2018} propose test statistics inspired by the debiasing approach in high-dimensional linear regression. Two variants are available, the first one uses graphical Lasso, here referred to as \texttt{JankovaGL}, the second is a nodewise Lasso approach, denoted by  \texttt{JankovaNW}. In the same sparse regime as above, these test statistics are asymptotically normal. 
\item \cite{Liu2013}  uses a bias correction of the sample covariance of residuals when considering the conditional regression for the variables $Y_i$ and $Y_j$.
Two variants are available depending on the method for the estimation of the regression parameters:  Lasso regression and a scaled Lasso regression, denoted by \texttt{LiuSL} resp. \texttt{LiuL}. Contrary to the other methods,  the asymptotic variance of these  test statistics is unknown. Thus, we will apply an extension of Algorithm~\ref{GreedyICL} that estimates the limit variance (see Supplementary Material). 
\end{itemize}

\subsection{Multiple testing procedures for GGM inference}\label{subsec:proc}
\cite{Liu2013}     proposes   a  classical multiple testing approach   for GGM inference based   on the   test statistics he proposed and an adjustment of the significance level of the tests. The approach achieves  the asymptotic control of  the FDR at the desired level $\alpha$ in the sparse setting. It is straightforward to use Liu's procedure with all other test statistics described above and this is done in our numerical study in the next section, where Liu's multiple testing approach  serves as a baseline to which our NSBM approach will be  compared. 
The main difference of Liu's approach and ours is that Liu only applies a correction of the significance level, but 
 does not model neither the joint distribution of the test statistics nor the graph structure. 

In detail, in the numerical study we consider the challenging setting where $p>n$.
We denote by \texttt{Ren-classical}, \texttt{JankovaNW-classical}, \texttt{JankovaGL-classical}, \texttt{LiuSL-classical}, \texttt{LiuL-classical} the classical multiple testing procedure by \cite{Liu2013} applied to the different  test statistics. Computations are performed with the R package \texttt{SILGGM} \citep{SILGGM}. Then, we apply the analogous  procedures with our NSBM approach, that is Algorithm  \ref{GreedyICL}  applied to the same test statistics, followed by the multiple testing procedure based on $\ell$-values. We denote them by 
 \texttt{Ren-NSBM}, \texttt{JankovaNW-NSBM}, \texttt{JankovaGL-NSBM}, \texttt{LiuSL-NSBM}, \texttt{LiuL-NSBM}. For  \texttt{LiuSL-NSBM} and \texttt{LiuL-NSBM} the version of the NSBM with unknown variances under the alternative (see Section 2 of the Supplementary Material)  is used. This is done with our R package \texttt{noisysbmGGM}. 
Furthermore,   the following popular procedures that do not control any error rate are applied:
 the procedure by \cite{MB} (\texttt{MB}) and graphical Lasso (\texttt{glasso}). Finally, the CLIME procedure (\texttt{clime}) \citep{Clime, Transclime} is applied to complete the picture.

\section{Numerical study}\label{sec:Simus}

In this section  various numerical experiments are conducted  to assess the performance of our NSBM approach for GGM inference, to compare to the state of the art, and to evaluate robustness to model misspecification.

\subsection{Comparison of   test statistics  on synthetic data}

\begin{figure}[t]
\subfloat[SBM]{
\includegraphics[width=.22\textwidth]{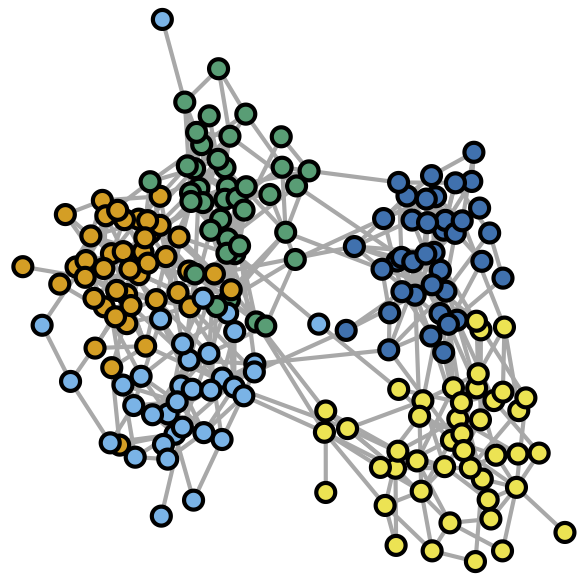}}\quad
 \subfloat[Hub]{
\includegraphics[width=.22\textwidth]{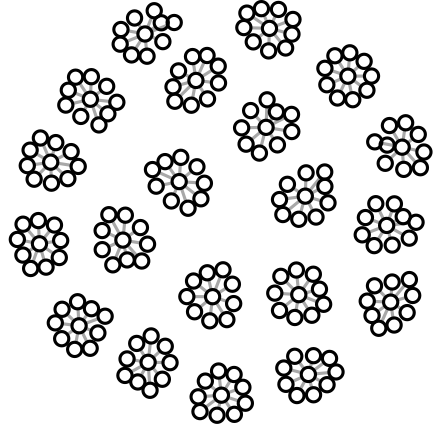}}\quad
 \subfloat[Band]{
\includegraphics[width=.22\textwidth]{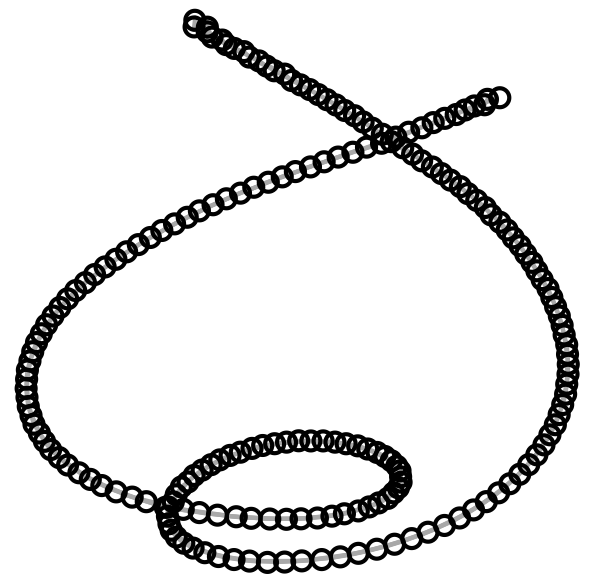}}\quad
\subfloat[Scale-free]{
\includegraphics[width=.22\textwidth]{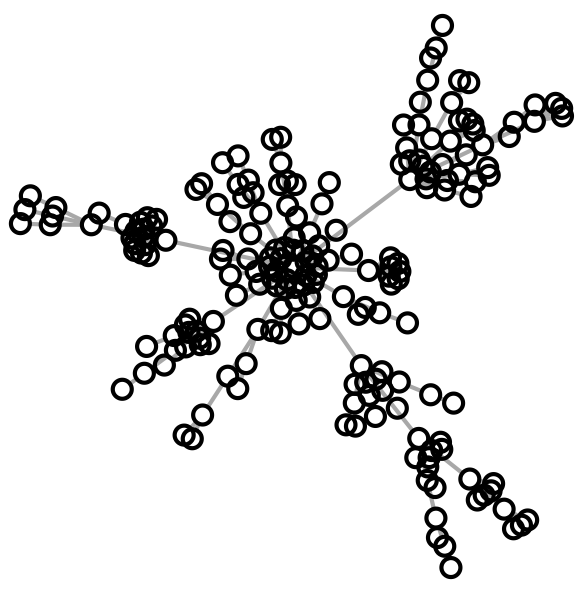}} 
\centering
\caption{Illustration of the  graph structures in the four settings.}
\label{Allgraphs}
\end{figure}

\begin{figure}[h!]
	\includegraphics[scale=1]{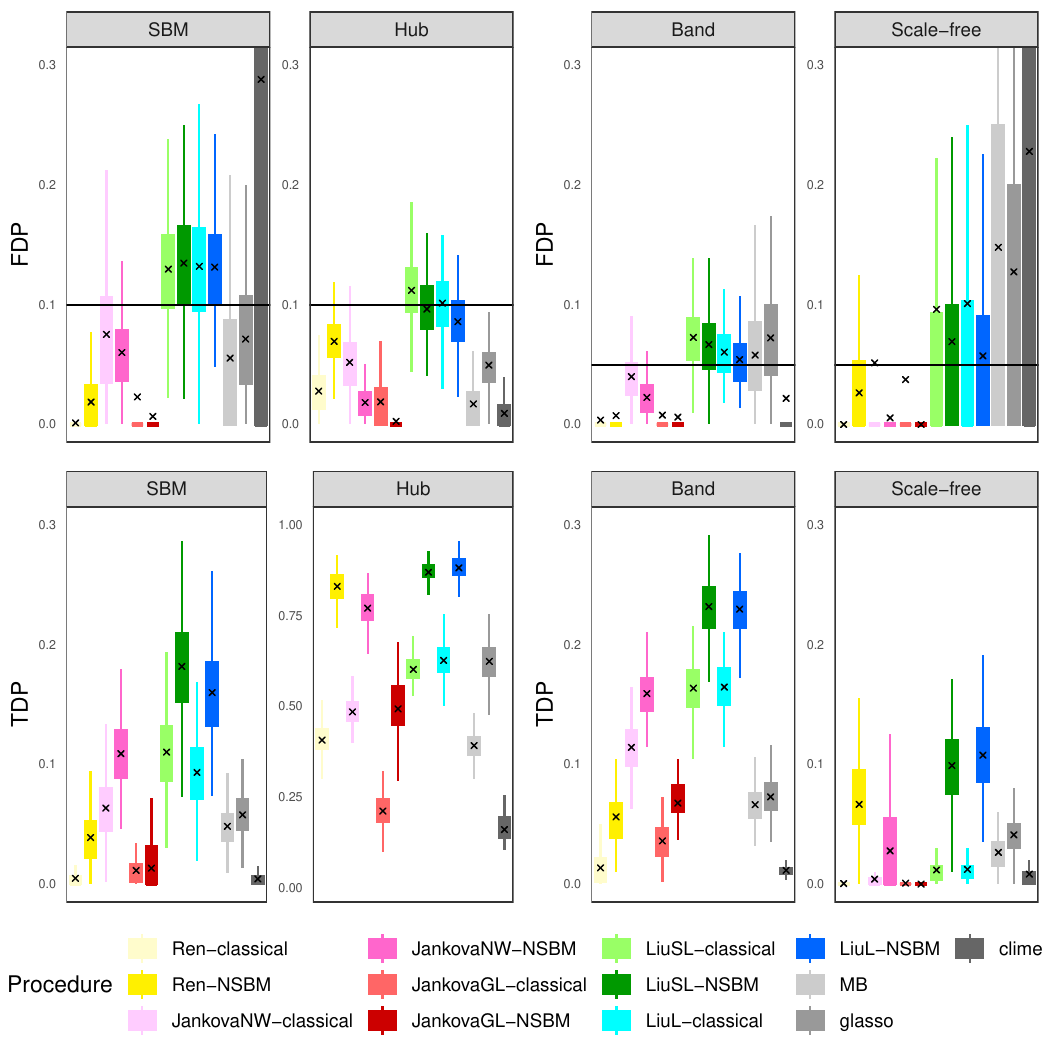}	
	\centering
	\caption{Boxplots of the FDP and  TDP for all procedures described in Section \ref{subsec:proc}
	for the settings SBM, hub, band and scale-free for 200 simulated datasets, where $p=200$ and $n=100$. Horizontal lines  represent the nominal level $\alpha$, with $\alpha=0.1$   in  the SBM and the hub setting 
and $\alpha=0.05$ in the band and scale-free cases.
  The crosses in the boxplots correspond to the sample FDR  (resp. TDR) defined as the mean of the FDP (resp.  TDP).
} 
	\label{fig:simp200}
\end{figure}

Data are generated from a Gaussian distribution  $\mathcal N_p(0, \Sigma)$ with different choices for the precision matrix $\Omega=(\omega_{i,j})_{i,j}=\Sigma^{-1}$. More precisely, we consider different   underlying graphs $A=(\1\{\omega_{i,j}\})_{i,j}$ illustrated in  Figure \ref{Allgraphs}. 
In  the SBM setting (a), the graph $A$ is the realization of a SBM with five communities of equal size and varying probabilities of connection among the blocks. To evaluate the robustness of our procedure  to model misspecification, we also consider   three settings where $A$ is not a SBM. In the hub setting~(b), the set of  nodes is partitioned into   groups of size 10. In each group one node is  a hub and as such connected to all other nodes in the group. In the band setting (c), every node $i$ is connected to all nodes $j$ at distance 3, that is $A_{i,j}=\1\{1\leq|i-j|\leq 3\}$. 
This yields a matrix $A$ with  bands of ones around the diagonal. Finally, in the scale-free setting (d), the graph is generated by preferential attachement   \citep{Barabasi}. That is, the graph is built iteratively starting from a small   initial graph, here  consisting of only two connected nodes. Then new nodes are added one by one and  each of them is connected to a single  node  in the graph with  probabilities proportional to the current  node degrees.



Now, to generate a precision matrix $\Omega$ and a covariance matrix $\Sigma$ associated with a given adjacency matrix $A$, we proceed as in the R package {\verb|huge|}. That is, first a positive definite  matrix $\Omega$ is obtained by the  transformation
\begin{equation*}
		\Omega=\gamma A + \text{diag}\left(|\min_{\lambda\in \text{Sp}(\gamma A)}(\lambda)|+\beta\right),
\end{equation*} 
	where  $\gamma =0.3$, $\beta=0.2$ and $\text{Sp}(\gamma A)$ denotes the spectrum of $\gamma A$.
In other words,  the  ones in $A$ are replaced by  $\gamma$ and the diagonal is such that  $\Omega$ is invertible. Hence,  $\Sigma=\Omega^{-1}$ is well defined and realizations  of the desired Gaussian distribution can be simulated. 

In the simulations there are  $p=200$ variables and the sample size is $n=100$.  In every setting all procedures described in Section \ref{subsec:proc} are evaluated on 200 datasets with  nominal level $\alpha=0.1$ in the SBM and the hub setting 
and $\alpha=0.05$ in the band and scale-free cases.
The performance of a  procedure is evaluated by comparing  the inferred graph $\hat{A}$ to  the true adjacency matrix $A$. 
In particular, the  false discovery proportion (FDP) is the proportion of errors among the edges declared as significant, whereas   the proportion of edges that are correctly declared as significant among the true edges in $A$  is referred to as the true    discovery proportion (TDP). They are defined as 
$$
\mathrm{FDP}=\frac{|\{(i,j): \hat A_{i,j}=1, A_{i,j}=0\}|}{|\{(i,j): \hat A_{i,j}=1\}| \vee 1},\qquad
\mathrm{TDP}=\frac{|\{(i,j): \hat A_{i,j}=1, A_{i,j}=1\}|}{|\{(i,j):   A_{i,j}=1\}|}.
$$
For procedures that are supposed to control the FDR, the FDP   should in average  be lower or equal than $\alpha$, while the TDP representing the power of the  procedure is desired to be as large as possible.

  Figure \ref{fig:simp200} displays the results. 
First,   we observe that concerning the FDR,  for all Ren- and Jankova-procedures the FDR is always controlled at the nominal level $\alpha$. The FDR of all Liu-procedures slightly exceeds $\alpha$, but the Liu-NSBM procedures mostly do better than their  classical inference counterparts. The last three procedures in the figure  
are not conceived for controlling the FDR, so the comparison to the nominal level does not really make sense since the output of these procedures is the same for any $\alpha$.

Second, in terms of power, in every setting the largest TDRs are achieved by NSBM based procedures, in particular our approach with both of Liu's test statistics appears to be the most powerful procedures. 
Notably, also the  power of the  methods that do not require the control of the FDR, namely the one by Meinshausen \& Bühlmann,   graphical Lasso and CLIME, is   largely below those of most other procedures. 
Moreover, in all settings the NSBM approach is more powerful than the classical inference procedure using the same test statistics. 
That is, whatever test statistic is used, there is a gain in power when using our NSBM approach. 
This is in line with what has been observed for other  multiple testing procedures, that is that
using a latent structure and 
   learning the model   is beneficial  \citep{SC09, cs09, Liu2016, RRV22}.
Furthermore, these results    confirm the observation done in \cite{RRV22}  that the SBM is an appropriate choice for graph inference, as the SBM is a highly flexible random graph model accommodating  a wide spectrum of graph topologies. 
Here we see that the SBM is   also appropriate  in the framework of   GGM inference. 
In other words, our approach is robust to model misspecification as good results are also obtained in the hub, band and scale-free settings in our study.  

  \subsection{Sparse setting}

As in practice, most graphs are rather sparse, we now investigate the behaviour of our procedure in the sparse setting described in terms of   the maximum degree in the graph. Indeed, most of the test statistics that we consider are known to be asymptotically normal in the sparse setting when $s = o(\sqrt{n}/\log(p))$,  where $s$ is the maximum node degree in the graph. 
For this, 
we simulate data  with given maximum degree.
More precisely,  given a maximum degree $s$, we first generate a sequence of degrees according to a power law with exponent $2$ and then try to simulate a graph $A$ with this degree sequence. 
Then, we proceed as previously to generate Gaussian vectors. 

  \begin{figure}[t]
	\includegraphics[scale=.7]{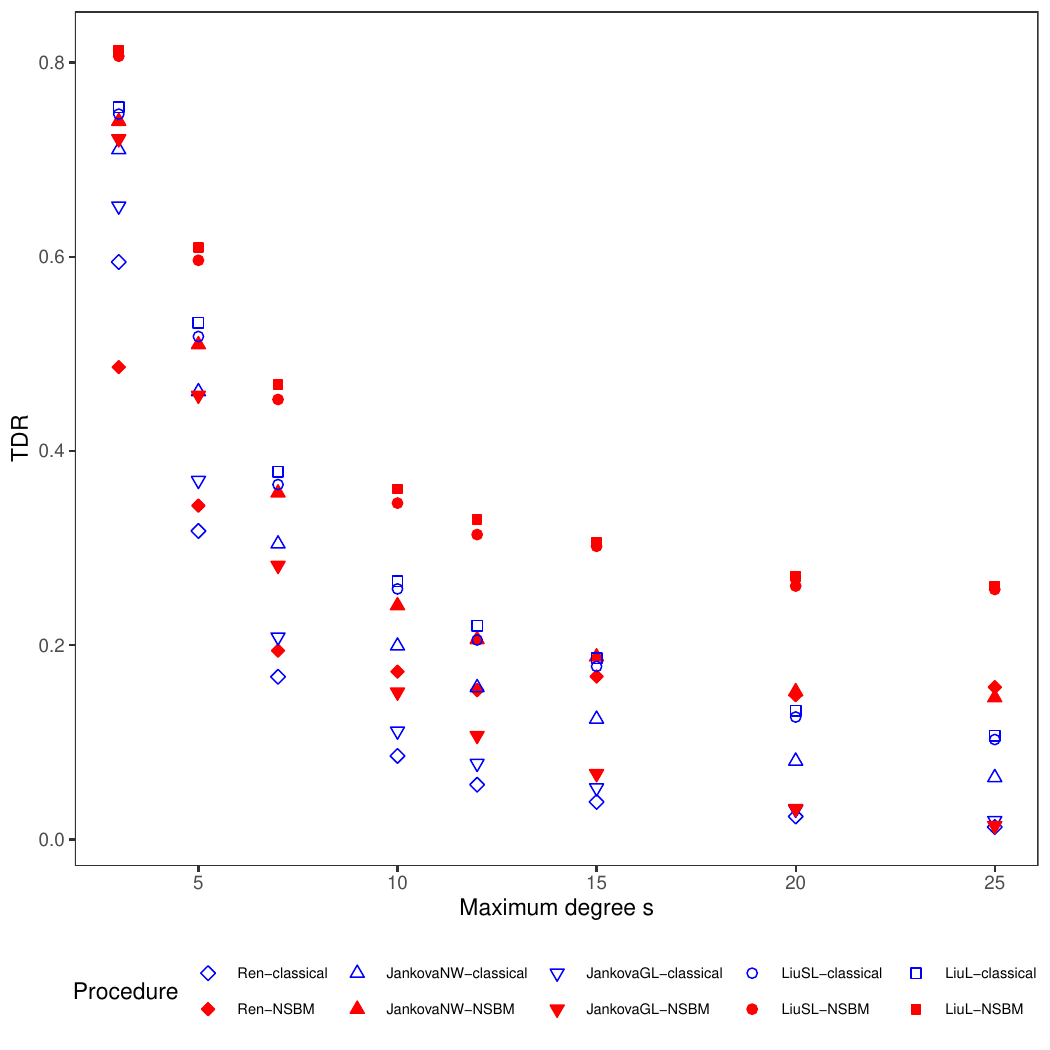}
	\centering
	\caption{TDR (Average TDP)  evaluated  for different procedures over 200 simulations versus the maximum degree $s$ of the graphs ($p=100, n=100, \alpha=0.1$). In  blue and filled red are respectively  represented the classical inference procedures and  our NSBM procedures with the different test statistics.}
		\label{maxdegree}
\end{figure}

We carry out simulations with graphs composed of $p = 100$ nodes and $n=100$ observations, for a maximum degree $s$ in the range from 3 to 20.  The average FDR and average TDR are estimated over 200 simulations and results are presented in Figure \ref{maxdegree}. 
As expected,   the larger the maximum degree $s$, the lower the power. This is true for all  methods, but the  decrease in power of the NSBM  procedures is less important than for the classical ones.
Thus, there is a clear  benefit of using our procedure for GGM inference in sparse settings.

\subsection{Validation on   a human T-cell dataset}

Now, we investigate the performance of our  procedure on a real dataset. 
We consider   the multivariate flow cytometry data produced by   \cite{Sachs}  concerning a human T-cell signalling pathway, whose deregulation may lead to carcenogenesis.
 The data are available from \url{https://www.science.org/doi/10.1126/science.1105809} and were extensively studied in the literature  \citep[see e.g.][]{Sachs, Heneo11, Desgranges15,Eaton07, Sachs18}. 
The amount of 11 specific proteins  in single cells was simultaneously measured  under different perturbation conditions of the cell. 
We focus on a general perturbation (anti-CD3/CD28 + ICAM-2) that overall stimulates the cellular signalling network. In this condition measurements for $902$ cells are available. 

As the dataset is rather huge,  it can be used to first compute a benchmark on the full dataset and then assess the performance of the inference procedures on small subsets of the data. More precisely, on the full dataset   various methods, namely the classical procedure with Liu's test statistics \texttt{LiuL} and nominal  level $\alpha=0.05$, yield
  the graph displayed in Figure \ref{inferredSachs}. The inferred  graph contains ten edges,  nine  of which are   interactions well known from the literature. The additional edge p38-JNK  was also  inferred by \cite{Sachs} but with low confidence, and  also was found by  \cite{Heneo11, Desgranges15, Eaton07}.

\begin{figure}[t]
	\begin{center}
		\includegraphics[width=.6\textwidth]{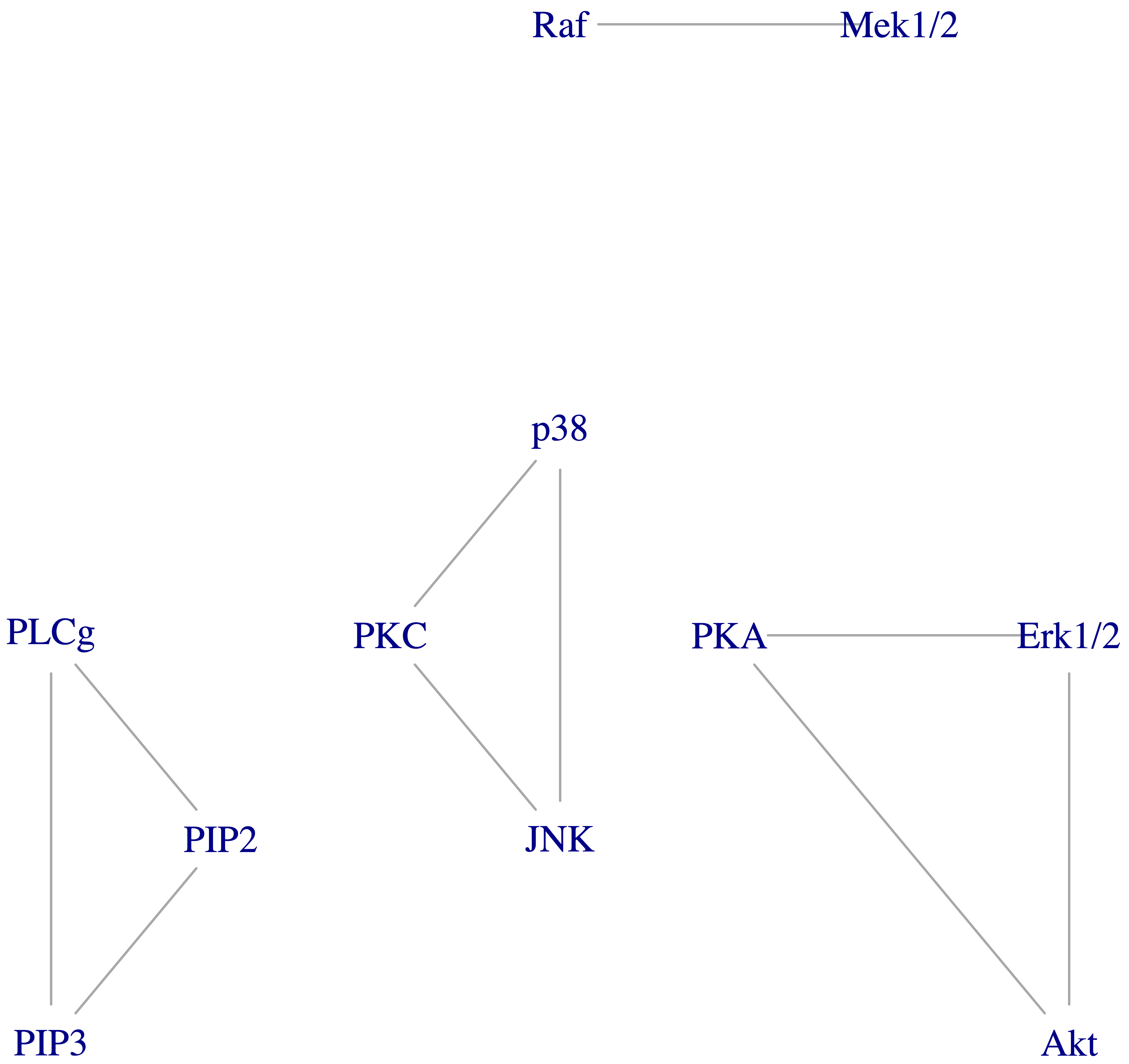}
	\end{center}
	\caption{Benchmark obtained on  the full dataset.}
	\label{inferredSachs}
\end{figure}


\begin{table}[t]
	\center
	\begin{tabular}{l|c|c|}
		\multicolumn{1}{c|}{edge} & \texttt{LiuL-classical}   & \texttt{LiuL-NSBM}     \\
		\hline 
		Raf -  Mek1/2 &  200 & 200 \\
		PLCg - PIP2   & 28  &44 \\ 
		PLCg - PIP3 & 121  &143\\ 
		PIP2 - PIP3   &  167 &172 \\
		Erk1/2 - Akt &199  &199 \\
		Erk1/2 -PKA  &42 &54 \\
		Akt - PKA & 92 &121 \\ 
		PKC - p38 & 146  &158 \\
		PKC - JNK & 107  &121 \\
		p38 - JNK &102 &115 
	\end{tabular} 
	\caption{The number of times the 10 edges of Figure \ref{inferredSachs} are detected over 200 subsampled datasets with sample size $n=20$ for both the classical and the NSBM procedure with Liu's test statistics \texttt{LiuL} with $\alpha=0.05$.}
	\label{TableSachs}
\end{table}

Now to  evaluate the ability   to recover these edges from a smaller dataset, we sample  subsets of 
$n=20$ cells 
and apply both the classical  and the NSBM procedures with the test statistic \texttt{LiuL} with $\alpha=0.05$.
Table~\ref{TableSachs}  reports   the number of times the edges of the benchmark network from Figure \ref{inferredSachs} are detected with each method on 200 sampled datasets.
Obviously, the  NSBM procedure detects all edges more often than the classical procedure. That is, there is a real gain in power also on real datasets. 
This confirms that using our procedure helps to improve GGM inference.

\section{Conclusion}
In this work a new approach for GGM inference is proposed that relies on the modeling of the graph structure by a latent variable model. It allows to control the false discovery rate and outperforms    state-of-the-art methods in terms of power, that is, more ``true'' edges are detected than with classical methods. 
The performance is assessed in various numerical experiments and the procedure is particularly suited for real datasets, since  it is  is robust to model misspecification and works on sparse networks as well as on small datasets.
 
It is noteworthy that this is a generic approach 
that can be used in combination with any test statistic for GGM inference. 
It can be viewed as a layer on top of the traditional problem of constructing a test statistic. 
So the user may consider his or her favorite test statistic and provide them to our NSBM method  for GGM inference. We have shown that this consistently yields more powerful results than with the classical inference approach. 

Moreover, the new greedy ICL algorithm for fitting the NSBM to a dataset is much more attractive in terms of computing time and  scalability to large graphs than the variational EM algorithm. It is noteworthy that the ICL algorithm can be used beyond the context of NSBM-based GGM inference.

\bibliographystyle{Chicago}

\bibliography{biblio}
\end{document}



\def\spacingset#1{\renewcommand{\baselinestretch}%
{#1}\small\normalsize} \spacingset{1}



\if0\blind
{
  \title{\bf Supplement to "Enhancing the Power of Gaussian Graphical Model Inference by Modeling the Graph Structure"}
  \author{Valentin Kilian, Tabea Rebafka and Fanny Villers\thanks{
    The authors gratefully acknowledge \textit{please list all relevant funding sources in the unblinded version}} 
        }
        \date{LPSM, Sorbonne Université, Paris, France}
  \maketitle
} \fi

\if1\blind
{
  \bigskip
  \bigskip
  \bigskip
  \begin{center}
    {\LARGE\bf Title}
\end{center}
  \medskip
} \fi

\bigskip
\begin{abstract}
This document provides additional information and details on the article "Enhancing the Power of GGM Inference by Modeling the Graph Structure". First, all mathematical details on the new greedy ICL algorithm for the NSBM are provided. Then another version of the NSBM where variances of the observations are supposed to be unknown is introduced and the associated inference algorithm is detailed. Moreover, further numerical results are presented to support the conclusions on the   performance of the NSBM approach for GGM inference.
 \end{abstract}

\spacingset{1.75} 

\section{Greedy algorithm}
  \label{appendix:Proof}

\subsection{Proof of Proposition 1}
The expression of $\mathrm{ICL^{SBM}}$ in   Proposition 1 of the main paper
is proven in \cite{Come15}. 
So we only prove the expression of $\mathrm{ICL^{noise}}(X,A,Z)$, 
which corresponds to the term of the ICL related to the noise layer of the NSBM. With the notations of the main paper,
\begin{align*}
	p(X|A, Z, {\mu}) &=  \prod_{(i,j)\in\mathcal{A}} p(X_{i,j} | A_{i,j}, Z_i, Z_j, {\mu})\nonumber\\
	&= \prod_{(i,j)\in\mathcal{A}} \left\{ \left[f_{\mathcal{N}(0,1)}(X_{i,j})\right]^{1-A_{i,j}} \prod_{  q\leq l} \left[f_{\mathcal{N}(\mu_{q,l},\sigma^2)}(X_{i,j})\right]^{A_{i,j}\mathcal{Z}_{q,l}^{i,j}}\right\}.  
\end{align*}
By integration,
\begin{align*}
&	\mathrm{ICL^{noise}}(X,A,Z) = \log \left( \int p(X|A, Z, { \mu})\boldsymbol \pi_\mu ({ \mu})d{ \mu}\right)\\
	&= \log \left(\prod_{(i,j)\in\mathcal{A}}  \left[f_{\mathcal{N}(0,1)}(X_{i,j})\right]^{1-A_{i,j}} \times
	\prod_{ q\leq l}  \int  \left\{\prod_{(i,j)\in\mathcal{A}}\left[f_{\mathcal{N}(\mu_{q,l},\sigma^2)}(X_{i,j})\right]^{A_{i,j}\mathcal{Z}_{q,l}^{i,j}}\right\} 
 f_{\mathcal{N}(\rho_0, \tau_0^2)}(\mu_{q,l})d\mu_{q,l}  \right)\\	
	&= \sum_{(i,j)\in\mathcal{A}} (1-A_{i,j}) \log (f_{\mathcal{N}(0,1)}(X_{i,j})) + 
	\sum_{q\leq l} \log (\mathcal{J}_{q,l}),
\end{align*}
where
\begin{align*}
 \sum_{(i,j)\in\mathcal{A}} (1-A_{i,j}) \log (f_{\mathcal{N}(0,1)}(X_{i,j}))  
 &= -\frac12 \sum_{(i,j)\in\mathcal{A}} (1-A_{i,j}) X_{i,j}^2 - \frac {N - M^A}2 \log(2\pi) ,
\end{align*}
and, for   $q\leq l$,
\begin{align*}
\mathcal{J}_{q,l} 
&=\int  \prod_{(i,j) \in I_{q,l}} f_{\mathcal{N}(\mu_{q,l},\sigma^2)}(X_{i,j})\times f_{\mathcal{N}(\rho_0, \tau_0^2)}(\mu_{q,l})d\mu_{q,l}\\
	&= \int  \left(\frac{1}{\sigma\sqrt{2\pi}}\right)^{n_{q,l}}\exp\left\{-\sum_{(i,j)\in I_{q,l}}\frac{(X_{i,j}-\mu_{q,l})^2}{2\sigma^2}\right\} \times \frac{1}{\tau_0\sqrt{2\pi}}\exp\left\{-\frac{(\mu_{q,l}-\rho_0)^2}{2\tau_0^2}\right\}d\mu_{q,l}\\
	&= \left(\frac{1}{\sigma\sqrt{2\pi}}\right)^{n_{q,l}}\frac{1}{\tau_0\sqrt{2\pi}}\int \exp\left\{-\frac{1}{2}\left[\frac{1}{\sigma^2}\sum_{(i,j)\in I_{q,l}}(X_{i,j}-\mu_{q,l})^2 + \frac{(\mu_{q,l}-\rho_0)^2}{\tau_0^2}\right]\right\}d\mu_{q,l}.
\end{align*}
Developing the term in brackets, it turns out that it is equal to 
\begin{align*}
	\mu_{q,l}^2\underbrace{\left(\frac{1}{\tau_0^2}+\frac{n_{q,l}}{\sigma^2}\right)}_{=a}&-2\mu_{q,l}\underbrace{\left(\frac{\rho_0}{\tau_0^2}+\frac{1}{\sigma^2}\sum_{(i,j)\in I_{q,l}}X_{i,j}\right)}_{= b}+\underbrace{\left(\frac{\rho_0^2}{\tau_0^2}+\frac{1}{\sigma^2}\sum_{(i,j)\in I_{q,l}}X_{i,j}^2\right)}_{=c}\\
	&=a\left(\mu-\frac{b}{a}\right)^2 -\frac{1}{a}(b^2-ac).
\end{align*}
Consequently, by integrating we obtain for  $\mathcal{J}_{q,l}$,
\begin{align*}
	\mathcal{J}_{q,l} 
	&= \left(\frac{1}{\sigma\sqrt{2\pi}}\right)^{n_{q,l}}\frac{1}{\tau_0\sqrt{a}}\exp\left\{\frac{1}{2a}( b^2-ac)\right\}.
\end{align*}
One can show that 
\begin{align*}
	b^2-ac 
	& = -\frac{1}{\tau_0^2\sigma^2}\sum_{(i,j)\in I_{q,l}}(X_{i,j}-\rho_0)^2 -\frac{n_{q,l}^2}{\sigma^4}S_{q,l}.
\end{align*}
Hence,
\begin{align*}
	\mathcal{J}_{q,l}  
		=&\left(\frac{1}{\sigma\sqrt{2\pi}}\right)^{n_{q,l}}\left(\frac{\sigma^2+\tau_0^2n_{q,l}}{\sigma^2}\right)^{-\frac12} \\
		 &\times\exp\left\{- \frac{1}{2(\sigma^2+ \tau_0^2n_{q,l})} \sum_{(i,j)\in I_{q,l}}(X_{i,j}-\rho_0)^2 -\frac{\tau_0^2n_{q,l}^2}{2(\sigma^4 + \tau_0^2\sigma^2n_{q,l})}S_{q,l} \right\}.
	\end{align*}
Then,
\begin{align*}
\sum_{q\leq l}\log	\mathcal{J}_{q,l}
&= -\log\sigma \left(\sum_{q\leq l}n_{q,l}-N_Q\right)-\frac12\log(2\pi)\sum_{q\leq l}n_{q,l}
-\frac12\sum_{q\leq l}\log\left( \sigma^2+{\tau_0^2n_{q,l}}\right) \\
&- \frac12\sum_{q\leq l}\frac{1}{ \sigma^2+ \tau_0^2n_{q,l}} \sum_{(i,j)\in I_{q,l}}(X_{i,j}-\rho_0)^2 
-\frac12\sum_{q\leq l}\frac{\tau_0^2n_{q,l}^2}{ \sigma^4 + \tau_0^2\sigma^2n_{q,l}}S_{q,l}.
\end{align*}
By gathering all terms and noting that
$M^A= \sum_{q\leq l}n_{q,l}$, we obtain the desired result.

 \subsection{Efficient computation of $\Delta_{g\rightarrow h}$}
 
 In the greedy algorithm  the impact on the ICL of swapping a node $i^{*}$ from its current block, say $g$, to another block $h$ must be evaluated. That is, one has to compute
 $$\Delta_{g\rightarrow h} = \mathrm{ICL^{NSBM}}(X,Z^\mathrm{swap}, A^\mathrm{swap}) - \mathrm{ICL^{NSBM}}(X,Z, A).$$
 The numerical evaluation of  $\mathrm{ICL^{NSBM}}$ has some computational cost, however the difference $\Delta_{g\rightarrow h}$ can be computed more efficiently, as changing the block assignment of a single node has an impact only   on a small number of the count statistics that appear in the expression of the  $\mathrm{ICL^{NSBM}}$  in Proposition 1 of the main paper. 
 For instance, the number of  nodes associated with block $q$, that is
 $n_q=  \sum_{i\in\dblbr n} Z_{i,q}$, only changes for $q=g$ and $q=h$.
 Let us add superscript $^\text{swap}$ to denote the quantities after the swap. As such we have $n_g^\text{swap}=n_g-1$, $n_h^\text{swap}=n_h+1$ and $n_q^\text{swap}=n_q$ for all $q\notin\{g,h\}$.
 
 For the count statistics $n_{q,l}$, 
it is clear that their value changes only if $q\in\{g,h\}$ or $l\in\{g,h\}$. And as $A$ and $A^\text{swap}$ are identical except on the $i^*$-th row and $i^*$-th column,   changes only come from edges involving node $i^*$. 
We introduce the sets $I_l^A=\{j :  Z_j=l, A_{i^*,j}=1\}$ indicating the set of nodes assigned to block $l$ to which $i^*$ is connected in graph $A$. 
 Then one can show that $n^\text{swap}_{g,l}=n^\text{swap}_{l,g}=n_{g,l}-  |I_{l}^A| $ for all $ l\neq h$, $n^\text{swap}_{h,l}=n^\text{swap}_{l,h}=n_{h,l}+  |I_{l}^{A^\text{swap}}| $ for all $ l\neq g$, and $n^\text{swap}_{g,h}=n^\text{swap}_{h,g}=n_{g,h}-  |I_{h}^A|+  |I_{g}^{A^\text{swap}}|$. If both $q\notin\{g,h\}$ and $l\notin\{g,h\}$, then $n^\text{swap}_{q,l}=n_{q,l}$.

  Concerning the changes in    $\bar n_{q,l}$, first note that   $ n_{q,l}+\bar n_{q,l}$ is the maximal number of possible edges between nodes in $q$ and $l$, that we may denote by $m_{q,l}$ and which can also be  written as $m_{q,l}=n_qn_l$ if $q\neq l$ and $m_{q,q}=n_q(n_q-1)/2$ for $q\in\dblbr Q$. We have , for all $l\notin\{g,h\}$,
  $m_{g,l}^\text{swap}=m_{l,g}^\text{swap}=m_{g,l}-n_l$ and   $m_{h,l}^\text{swap}=m_{l,h}^\text{swap}=m_{h,l}+n_l$. Moreover, $m_{g,g}^\text{swap}=m_{g,g}-n_g+1$, $m_{h,h}^\text{swap}=m_{h,h}+n_h$ and $m_{g,h}^\text{swap}=m_{h,g}^\text{swap}=m_{g,h}-n_h+n_g-1$.  
Then, $\bar n_{q,l}^\text{swap} = m_{q,l}^\text{swap}-n_{q,l}^\text{swap}$.
     
Likewise, only the sets $I_{q,l}$ with $q\in\{g,h\}$ or $l\in\{g,h\}$ are changed by the swap. More precisely,
 $$I_{q,l}^\text{swap} =\left\{I_{q,l}\backslash \{\1_{\{q=g\}}\odot I_l^A\cup \1_{\{l=g\}}\odot I_q^A\}
\right\}\bigcup\left\{  \1_{\{q=h\}}\odot I_l^{A^\text{swap}}\cup \1_{\{l=h\}}\odot I_q^{A^\text{swap}} 
\right\},
$$
where $1\odot I=I$ and $0\odot I=\emptyset$ for any set $I$.

With these observations on the count statistics, it is clear that the computation of $\Delta_{g\rightarrow h}$ can be simplified, since 
the difference of the  sums taken over all $(q,l)$ such that $q\leq l$ in  $\mathrm{ICL^{NSBM}}$ reduces to   sums involving only quantities that change with the swap of $i^*$ from $g$ to $h$. More precisely, only the terms with pairs of indices $(q,l)$ in the   set  $\mathcal S_{g,h}$  are relevant, where $\mathcal S_{g,h}$ is defined as
$$\mathcal S_{g,h} =\left\{ (g,l):   l\in\dblbr Q \right\}\bigcup \left\{(h,l):   l\in\dblbr Q \backslash\{g\}
\right\}.
$$
     
     Now, to state  $\mathrm{ICL^{NSBM}}$  explicitly, two cases must be distinguished depending on whether the swap empties block $g$ (that is $i^*$ is the last node assigned to block $g$) or not.

 \begin{proposition}  
 		\label{DeltaICLswap}
For $A^\mathrm{swap}$ defined  in (4) 
		in the main paper,
 		\begin{equation*}
 			\Delta_{g \rightarrow h} = \Delta^\mathrm{SBM}_{g \rightarrow h} +  \Delta^\mathrm{noise}_{g \rightarrow h}, 
 		\end{equation*}
where $\Delta^\mathrm{SBM}_{g \rightarrow h} =\mathrm{ICL^{SBM}}(A^\mathrm{swap},Z^\mathrm{swap})-\mathrm{ICL^{SBM}}(A,Z)$ refers to the change of $\mathrm{ICL^{SBM}}$ induced by the swap and  
 $\Delta^\mathrm{noise}_{g \rightarrow h} = \mathrm{ICL^{noise}}(X,A^\mathrm{swap},Z^\mathrm{swap})-\mathrm{ICL^{noise}}(X,A,Z)$ the variation of the term related to the noise layer in the NSBM.
For the form of $\Delta^\mathrm{SBM}_{g \rightarrow h}$ and $\Delta^\mathrm{noise}_{g \rightarrow h}$  two cases are to be distinguished.
\paragraph{Case 1} If $i^*$ is not the only node assigned to block $g$ (that is $n_g>1$), then   moving     $i^*$   to another block does not change the number of blocks $Q$. Then 
\begin{align*}
&\Delta^\mathrm{SBM}_{g \rightarrow h} 
= \log \frac{n_0+n_h}{n_0+n_g-1} +  \sum_{ (q,l)\in\mathcal S_{g,h} }\log \frac{B(\eta_0+n_{q,l}^\mathrm{swap}, \xi_0+\bar n_{q,l}^\mathrm{swap})}{B(\eta_0+n_{q,l},\xi_0+\bar n_{q,l})},  \\
&\Delta^\mathrm{noise}_{g \rightarrow h} 
=-\frac{1}{2}\sum_{j \in\dblbr p}    X_{i^*,j} ^2 (A_{i^*,j} - A^\mathrm{swap}_{i^*,j})
- \sum_{ (q,l)\in\mathcal S_{g,h}}\left( n_{q,l}^\mathrm{swap}-n_{q,l}\right)\log\sigma\\
&\quad-\frac{1}{2} \sum_{  (q,l)\in\mathcal S_{g,h}}\log  \frac{\sigma^2+\tau_0^2n_{q,l}^\mathrm{swap}}{\sigma^2+\tau_0^2n_{q,l}}    
 -\frac{\tau_0^2}{2\sigma^2} \sum_{(q,l)\in\mathcal S_{g,h}}\left\{ \frac{(n_{q,l}^{\mathrm{swap}})^2S_ {q,l}^{\mathrm{swap}}}{\tau_0^2 n_{q,l}^{\mathrm{swap}} + \sigma^2 } - \frac{n^2_{q,l} S_{q,l} }{\tau_0^2n_{q,l}+ \sigma^2}  \right\}\\
 &\quad -\frac{1}{2}  \sum_{ (q,l)\in\mathcal S_{g,h}} \left\{ \frac{1}{\sigma^2+\tau_0^2n_{q,l}^\mathrm{swap}}\sum_{(i,j)\in I_{q,l}^{\mathrm{swap}}}(X_{i,j}-\rho_0)^2 - \frac{1}{\sigma^2+\tau_0^2n_{q,l}}\sum_{(i,j)\in I_{q,l}}(X_{i,j}-\rho_0)^2\right\}.
\end{align*}
 
 \paragraph{Case 2}
  If $i^*$ is   the only node assigned to block $g$ (that is $n_g=1$), then   
 removing $i^*$ from $g$, block $g$ disappears from the model 	 so that $Q^\mathrm{swap} = Q-1$ and
 		 		\begin{align*}
 			&\Delta^\mathrm{SBM}_{g \rightarrow h} 
			= \log\frac{n_0+n_h}{n_0}
			+ \log \left( \frac{ \Gamma(Qn_0 + p)\Gamma((Q-1)n_0)}{\Gamma((Q-1)n_0 + p) \Gamma(Qn_0)} \right)+QB(\eta_{0},\xi_{0}) \\
			  &\quad+  \sum_{l\in\dblbr Q\backslash\{g\} }\log \frac{B(\eta_0+n^\mathrm{swap}_{h,l},\xi_0+\bar n^\mathrm{swap}_{h,l})}{B(\eta_0+n_{h,l},\xi_0+\bar n_{h,l})} -  \sum_{l\in\dblbr Q}\log  B(\eta_0+n_{g,l},\xi_0+\bar n_{g,l}), \\
 			&\Delta^\mathrm{noise}_{g \rightarrow h}
			=  -\frac{1}{2}\sum_{j \in\dblbr p}    X_{i^*,j} ^2 (A_{i^*,j} - A^\mathrm{swap}_{i^*,j})
			- \left(Q+\sum_{ l\in\dblbr Q\backslash\{g\}} \left( n_{h,l}^\mathrm{swap}-n_{h,l}\right) - \sum_{ l\in\dblbr Q  } n_{g,l} \right)\log\sigma\\	
				&\quad-\frac{1}{2} \sum_{ l\in\dblbr Q\backslash\{g\} }\log  \frac{\sigma^2+\tau_0^2n_{h,l}^\mathrm{swap}}{\sigma^2+\tau_0^2n_{h,l}}
			 +\frac{1}{2} \sum_{ l\in\dblbr Q  }\log   (\sigma^2+\tau_0^2n_{g,l})\\	
			 &\quad -\frac{\tau_0^2}{2\sigma^2} \sum_{l\in\dblbr Q\backslash\{g\}}\left\{ \frac{(n_{h,l}^{\mathrm{swap}})^2S_ {h,l}^{\mathrm{swap}}}{\tau_0^2 n_{h,l}^{\mathrm{swap}} + \sigma^2 } - \frac{n^2_{h,l} S_{h,l} }{\tau_0^2n_{h,l}+ \sigma^2}  \right\}
			  +\frac{\tau_0^2}{2\sigma^2} \sum_{l\in\dblbr Q } \frac{n_{g,l}^2S_ {g,l}}{\tau_0^2 n_{g,l}  + \sigma^2 } \\
			&\quad -\frac{1}{2}  \sum_{ l\in\dblbr Q\backslash\{g\}}  \left\{\frac{1}{\sigma^2+\tau_0^2n_{h,l}^\mathrm{swap}}\sum_{(i,j)\in I_{h,l}^{\mathrm{swap}}}(X_{i,j}-\rho_0)^2 - \frac{1}{\sigma^2+\tau_0^2n_{h,l}}\sum_{(i,j)\in I_{h,l}}(X_{i,j}-\rho_0)^2\right\}\\
			 &\quad+ \frac{1}{2}  \sum_{ l\in\dblbr Q }  \frac{1}{\sigma^2+\tau_0^2n_{g,l} }\sum_{(i,j)\in I_{g,l}}(X_{i,j}-\rho_0)^2. 
 		\end{align*}
 	\end{proposition}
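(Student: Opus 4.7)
The plan is to exploit the additive decomposition $\mathrm{ICL^{NSBM}} = \mathrm{ICL^{SBM}} + \mathrm{ICL^{noise}}$, established by Proposition 1 in the main paper, which immediately yields the split $\Delta_{g\to h} = \Delta^{\mathrm{SBM}}_{g\to h} + \Delta^{\mathrm{noise}}_{g\to h}$. The remaining work is then to compute each of the two differences explicitly using the closed-form expression for $\mathrm{ICL^{SBM}}$ and the expression for $\mathrm{ICL^{noise}}$ derived in Section~\ref{appendix:Proof}, then simplify by cancelling every term that is unaffected by the swap. The key input is the bookkeeping of the count statistics performed in the preceding subsection: the only block index whose $n_q$ changes is $g$ or $h$, the only pairs $(q,l)$ with $q\le l$ for which $n_{q,l}$, $\bar n_{q,l}$, $I_{q,l}$, or $S_{q,l}$ differ between $Z$ and $Z^{\mathrm{swap}}$ are those in $\mathcal{S}_{g,h}$, and edges outside row/column $i^*$ of $A$ are untouched.

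First I would treat Case 1, where $n_g>1$ so that $Q^{\mathrm{swap}}=Q$. For $\Delta^{\mathrm{SBM}}_{g\to h}$, the $Q$-dependent prefactors in $\mathrm{ICL^{SBM}}$ (Dirichlet normalizing constants $\Gamma(Qn_0)/\Gamma(Qn_0+p)$ and the sum $\sum_{q\le l}\log B(\eta_0,\xi_0)$ arising from the integration of the connectivity prior) are the same before and after the swap and cancel. The log-Gamma term associated with the block sizes contributes only at indices $g$ and $h$, producing $\log\Gamma(n_0+n_h+1) + \log\Gamma(n_0+n_g-1) - \log\Gamma(n_0+n_h) - \log\Gamma(n_0+n_g)$, which simplifies via $\Gamma(x+1)=x\Gamma(x)$ to the single log ratio displayed. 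The Beta-function sum $\sum_{q\le l}\log B(\eta_0+n_{q,l},\xi_0+\bar n_{q,l})$ reduces to the sum over $\mathcal{S}_{g,h}$ by the observation above. For $\Delta^{\mathrm{noise}}_{g\to h}$, I would split the five families of terms in $\mathrm{ICL^{noise}}$ (the $(1-A_{i,j})X_{i,j}^2$ sum, the $\log\sigma$ sum, the $\log(\sigma^2+\tau_0^2 n_{q,l})$ sum, the $(X_{i,j}-\rho_0)^2$ sum, and the $S_{q,l}$ sum) and in each one retain only terms that change. The first sum's change equals $-\tfrac12\sum_j X_{i^*,j}^2 (A_{i^*,j}-A^{\mathrm{swap}}_{i^*,j})$ because the only edges whose $A$-value changed are those incident to $i^*$. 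The remaining four sums localize to $\mathcal{S}_{g,h}$ in exactly the form stated.

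Then I would handle Case 2, where block $g$ disappears after the swap ($Q^{\mathrm{swap}}=Q-1$). The extra contributions relative to Case 1 come from three sources: the $Q$-dependent Dirichlet prefactors no longer cancel, giving the term $\log[\Gamma(Qn_0+p)\Gamma((Q-1)n_0)/(\Gamma((Q-1)n_0+p)\Gamma(Qn_0))]$; the sum $\sum_{q\le l}\log B(\eta_0,\xi_0)$ in $\mathrm{ICL^{SBM}}$ loses the pairs with $q=g$ or $l=g$, contributing $QB(\eta_0,\xi_0)$; and in $\mathrm{ICL^{noise}}$, pairs $(g,l)$ that were present before the swap have no counterpart after, so instead of appearing as a symmetric $n_{q,l}^{\mathrm{swap}}-n_{q,l}$ difference they appear as isolated $-n_{g,l}$ terms (which is why the signs in the final displayed equation split into a difference over $(h,l)$ and an isolated sum over $(g,l)$). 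I would also need to note that the contribution from the empty-block indicator $n_g^{\mathrm{swap}}=0$ gives $\log n_0$ in the denominator of the first log ratio, rather than $\log(n_0+n_g-1)$.

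The main obstacle I expect is purely organizational rather than conceptual: cleanly handling the asymmetry between Case 1 and Case 2 in the noise term, particularly ensuring that the coefficients of $\log\sigma$ match (the counting $Q + \sum_{l\ne g}(n_{h,l}^{\mathrm{swap}}-n_{h,l}) - \sum_l n_{g,l}$ must be checked against the change in $M^A$ plus the change in $N_Q$ of Proposition~1). Once the count identities from the preceding subsection are applied carefully and the $Q$-dependent prior terms are separated out in Case 2, the remaining manipulations are routine algebra of logarithms and ratios.
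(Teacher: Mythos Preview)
Your proposal is correct and follows essentially the same approach as the paper. The paper's own proof is even more terse: it defers the $\Delta^{\mathrm{SBM}}_{g\to h}$ formulae to \cite{Come15} and, for $\Delta^{\mathrm{noise}}_{g\to h}$, simply invokes the two localization facts you identified (terms with $(q,l)\notin\mathcal S_{g,h}$ are unchanged; $A$ and $A^{\mathrm{swap}}$ agree off the $i^*$-th row/column) together with $n^{\mathrm{swap}}_{g,l}=0$ in Case~2, calling the rest ``straightforward.'' Your more explicit bookkeeping of the Dirichlet prefactors, the Beta sum, and the five families of noise terms is exactly the direct computation the paper alludes to.
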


\begin{proof}[Proof of  Proposition \ref{DeltaICLswap}]
The  expressions of    $\Delta^\mathrm{SBM}_{g \rightarrow h}$ are given in \cite{Come15}. The formulae for $\Delta^\mathrm{noise}_{g \rightarrow h}$ are obtained by direct computation and by taking into account two particular facts.
First,   quantities related to pairs of blocks $(q,l)$ are unchanged under the swap if both indices $q$ and $l$ are different from $g$ and $h$, that is $q,l\notin\mathcal S_{g,h}$.
Second, 
 by definition, $A^\mathrm{swap}$ coincides with $A$ for all entries  not involving $i^*$, so that   corresponding terms disappear when taking the difference.
To obtain the expression in the second case, we use that 	
 $n^\mathrm{swap}_{g,l}=n^\mathrm{swap}_{l,g}=0$ for all $l\in\dblbr Q$. Then the  result is   straightforward.
\end{proof}

\subsection{Merging entire blocks}

In our numerical experiments, we observed that the solution obtained by Algorithm 1 of the main paper
can be improved, by checking whether merging  entire blocks increases the ICL  criterion. Denote by  $\Delta_{g \cup h}$  the variation of $\mathrm{ICL^{NSBM}}$ when blocks $g$ and $h$ are merged, where without loss of generality we assume that all nodes in $g$ are assigned to $h$. Then, $\Delta_{g \cup h}$  is 
  defined as 
$$\Delta_{g\cup h} = \mathrm{ICL^{NSBM}}(X, A^\mathrm{merge}, Z^\mathrm{merge}) - \mathrm{ICL^{NSBM}}(X, A,Z),$$
where $Z^\mathrm{merge}$ coincides with $Z$ except for all $i$ such that $Z_i=g$, for which we set $Z_i^\mathrm{merge}=h$, 
and $A^\mathrm{merge}$ is obtained from $A$ by modifying only the entries $(i,j)$ involving nodes that belong to blocks $g$ or $h$. 
More precisely,  with  $I_l=\{i: Z_i=l\}$ and  $V^\mathrm{merge}=\{(i,j): i\in I_g\cup I_h \text{ or }j\in I_g\cup I_h\}=\{(i,j): Z_i^\mathrm{merge}=h \text{ or }Z_j^\mathrm{merge}=h\}$,
$$A^\mathrm{merge}_{i,j}= \1\{\rho^\mathrm{merge}_{i,j}>0.5\},\quad \forall (i,j)\in V^\mathrm{merge}\qquad\text{and}\qquad A^\mathrm{merge}_{i,j}=A_{i,j},\quad  \forall (i,j)\notin V^\mathrm{merge},
$$
with $\rho^\mathrm{merge}_{i,j}=\P_{\hat \theta^\mathrm{merge}}(A_{i,j}=1\mid X, Z^\mathrm{merge})$, where $\hat \theta^\mathrm{merge}$ is the updated parameter estimate obtained with $Z^\mathrm{merge}$.
 	The expression of the variation $\Delta_{g\rightarrow h}$ is given in Proposition \ref{DeltaICLmerge}.
	
The variation $\Delta_{g \cup h}$ is evaluated for each pair of blocks $(g,h)$. Then the blocks $g$ and $h$, for which the variation is the largest, are definitely merged. 
Merging blocks is repeated  until no further merge   increases the $\mathrm{ICL}$ criterion anymore. A similar merge algorithm has also been proposed in \cite{Come15} for the inference in the classical SBM.  The algorithm is summarized in Algorithm S\ref{MergeICL}.

We introduce notations similar  to the previous algorithm  with superscript $^\mathrm{merge}$ to denote the quantities after merging blocks $g$ and $h$. For the number of nodes per block, we have 
$n_h^\mathrm{merge}=n_h+n_g$, $n_g^\mathrm{merge}=0$, and $n_q^\mathrm{merge}=n_q$ for all $q\notin\{g,h\}$.
For the set  of edges $(i,j)$ between blocks $q$ and $l$, 
$I^\mathrm{merge}_{g,l}=I^\mathrm{merge}_{l,g}=\emptyset$ for $l\in\dblbr Q$,
$I^\mathrm{merge}_{h,l}=I^\mathrm{merge}_{l,h}=\{(i,j):\cZ_{h,l}^{i,j\mathrm{merge}}A_{i,j}^\mathrm{merge}\} $ for $l\neq g$ and $I^\mathrm{merge}_{q,l}=I_{q,l}$ for $q,l\notin\{g,h\}$.
For the number of edges between blocks $q$ and $l$, we find
$n_{g,l}^\mathrm{merge}=n_{l,g}^\mathrm{merge}=0$ for $l\in\dblbr Q$, 
$n_{h,l}^\mathrm{merge}=n_{l,h}^\mathrm{merge}=|I^\mathrm{merge}_{h,l}|$ for $l\neq g$ and
$n_{q,l}^\mathrm{merge}=n_{q,l}$ for all other indices $(q,l)$.
The  number of possible edges between blocks $q$ and $l$ satisfies:
$m_{g,l}^\mathrm{merge}=m_{l,g}^\mathrm{merge}= 0$ for $l\in\dblbr Q$, 
$m_{h,l}^\mathrm{merge}=m_{l,h}^\mathrm{merge}= m_{h,l}+m_{g,l} $ for $l\neq \{g,h\}$ and
$m_{h,h}^\mathrm{merge} = m_{h,h}+m_{g,g}+m_{g,h}$. 
This implies that 
$\bar n_{g,l}^\mathrm{merge}=\bar n_{l,g}^\mathrm{merge}=0$ for $l\in\dblbr Q$, 
$\bar n_{h,l}^\mathrm{merge}=\bar n_{l,h}^\mathrm{merge}=m_{h,l}^\mathrm{merge} - n_{h,l}^\mathrm{merge}$ for $l\neq g$ and
$\bar n_{q,l}^\mathrm{merge}=\bar n_{q,l}$ for all other indices $(q,l)$.
Finally, we denote
$S_{h,l}^\mathrm{merge}$ the empirical variance associated with $\{X_{i,j}:(i,j)\in I^\mathrm{merge}_{h,l}\}$.
\begin{algorithm}[t]
	\SetAlgoLined
	\SetKwInOut{Input}{Input}
	\SetKwInOut{Output}{Output}
	\Input{Observation   $X$, clustering $Z$ into $Q$ clusters, parameter estimate $\theta$, edge probabilities $\rho$}
	\Output{Clustering $Z$, estimate $\theta$, edge probabilities $\rho$}
	 Compute $A$ with entries $A_{i,j}= \1\{\rho_{i,j}>0.5\}$\;
	\While{not converged}{
		\For{every pair of blocks $(g,h)$ with $g< h$}{
			Calculate $\Delta_{g \cup h}$ from  $X$, $Z$, $A$ and $A^\mathrm{merge}$ according to Proposition \ref{DeltaICLmerge}\;  
		}
		Merge the blocks $g$ and $h$ that achieve the largest increase   $\Delta_{g \cup h}$\;
		Update $Z, Q, \theta, \rho, A$ accordingly.
	}
	\caption{Merge ICL algorithm for NSBM inference} \label{MergeICL}
\end{algorithm}

\begin{proposition}  
	\label{DeltaICLmerge}
	The variation of the ICL when blocks $g$ and $h$ are merged is equal to 
	$$\Delta_{g \cup h} = \Delta^\mathrm{SBM}_{g \cup h} + \Delta^\mathrm{noise}_{g \cup h},$$	
	with
	\begin{align*}
&\Delta^\mathrm{SBM}_{g \cup h} 
=  \log  \frac{\Gamma(n_0 +n_h +n_g)\Gamma(Qn_0 + p)\Gamma(n_0)\Gamma((Q-1)n_0) }{\Gamma(n_0  +n_g)\Gamma(n_0 +n_h )  \Gamma((Q-1)n_0 + p)\Gamma(Qn_0)}     \\
&\quad + \sum_{l \in\dblbr Q\backslash  \{g\}}\log \frac{B(\eta_0+n_{h,l}^\mathrm{merge},\xi_0+\bar n_{h, l}^\mathrm{merge})}{B(\eta_{0}+n_{h,l}, \xi_0+\bar n_{h,l})} 
- \sum_{l\in\dblbr Q }  \log  B(\eta_0+n_{g,l},\xi_0+\bar n_{g,l}) 
+ Q B(\eta_{0},\xi_{0}), 
\end{align*}
and 
\begin{align*}
&\Delta^\mathrm{noise}_{g \cup h}  
= - \frac12 \sum_{(i,j)\in V^\mathrm{merge}} (A_{i,j}-A_{i,j}^\mathrm{merge})X_{i,j}^2
	- \left(Q+\sum_{ l\in\dblbr Q\backslash\{g\}} \left( n_{h,l}^\mathrm{merge}-n_{h,l}\right) - \sum_{ l\in\dblbr Q  } n_{g,l} \right)\log\sigma\\	
& -\frac{1}{2}  \sum_{ l\in\dblbr Q\backslash\{g\} } \log  \frac{\sigma^2 +\tau_0^2n_{h, l}^\mathrm{merge}}{\sigma^2 +\tau_0^2n_{h, l}} 
+\frac{1}{2}  \sum_{ l\in\dblbr Q  } \log(\sigma^2 +\tau_0^2n_{g, l})\\
& -\frac{\tau_0^2}{2\sigma^2}   \sum_{l\in\dblbr Q\backslash\{g\}} \left\{\frac{(n _{h, l}^\mathrm{merge})^2 S_{h, l}^\mathrm{merge}}{\tau_0^2  n_{h, l}^\mathrm{merge}+\sigma^2 } - \frac{n _{h, l}^2 S_{h, l}}{\tau_0^2 n_{h, l}+\sigma^2 } \right\}
 +\frac{\tau_0^2}{2\sigma^2}   \sum_{l\in\dblbr Q}   \frac{n _{g, l}^2 S_{g, l}}{\tau_0^2  n_{g, l}+\sigma^2} \\
&- \frac{1}{2}   \sum_{l\in\dblbr Q\backslash\{g\} }    \left\{\frac{1}{\sigma^2+\tau_0^2 n_{h, l}^\mathrm{merge}} \sum_{(i,j)\in I_{h, l}^\mathrm{merge}} (X_{i,j} - \rho_0)^2 - \frac{1}{\sigma^2+\tau_0^2 n_{h, l}} \sum_{(i,j)\in I_{h, l}} (X_{i,j} - \rho_0)^2\right\} \\
& +\frac{1}{2}   \sum_{l\in\dblbr Q }      \frac{1}{\sigma^2+\tau_0^2 n_{g, l}} \sum_{(i,j)\in I_{g, l}} (X_{i,j} - \rho_0)^2.
	\end{align*}
\end{proposition}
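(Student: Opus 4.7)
The plan is to decompose $\Delta_{g \cup h}$ using the additive structure $\mathrm{ICL^{NSBM}} = \mathrm{ICL^{SBM}} + \mathrm{ICL^{noise}}$, so that the two claimed pieces $\Delta^\mathrm{SBM}_{g \cup h}$ and $\Delta^\mathrm{noise}_{g \cup h}$ can be treated independently. The SBM part is exactly the one established in \cite{Come15} for the classical SBM merge move, and I would just cite it: plugging in the merge count statistics $n_h^\mathrm{merge}=n_h+n_g$, $n_g^\mathrm{merge}=0$, $Q^\mathrm{merge}=Q-1$ and the updated $n_{q,l}^\mathrm{merge}$ and $\bar n_{q,l}^\mathrm{merge}$ recorded just before the statement, and observing which Beta and Gamma factors cancel, yields the displayed $\Delta^\mathrm{SBM}_{g \cup h}$.

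For $\Delta^\mathrm{noise}_{g \cup h}$, I would start from the closed form of $\mathrm{ICL^{noise}}(X,A,Z)$ derived in the proof of Proposition 1, written schematically as the sum of the $-\tfrac12\sum_{(i,j)}(1-A_{i,j})X_{i,j}^2$ term (minus the Gaussian normalizing constant) and the $\sum_{q\leq l}\log \mathcal{J}_{q,l}$ term expanded into five pieces: a $\log\sigma$ piece, a $\log(2\pi)$ piece, a $\log(\sigma^2+\tau_0^2 n_{q,l})$ piece, a $(X_{i,j}-\rho_0)^2$ piece, and an $S_{q,l}$ piece. Evaluating this at $(A^\mathrm{merge},Z^\mathrm{merge})$ and subtracting, I would systematically exploit two cancellations: (i) for indices $(q,l)$ with $q,l\notin\{g,h\}$, all count statistics are unchanged, so these contributions vanish termwise; (ii) the noise-layer density factors over edges, so the first sum reduces to $-\tfrac12\sum_{(i,j)\in V^\mathrm{merge}}(A_{i,j}-A^\mathrm{merge}_{i,j})X_{i,j}^2$, and the $\log(2\pi)$ piece together with the $\log\sigma$ piece combine into a single $\log\sigma$ coefficient via the identity $\sum_{q\leq l}n_{q,l}^\mathrm{merge}-\sum_{q\leq l}n_{q,l}=\sum_{l\neq g}(n_{h,l}^\mathrm{merge}-n_{h,l})-\sum_{l}n_{g,l}$.

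Next I would use the merge identities $n_{g,l}^\mathrm{merge}=0$ and $I_{g,l}^\mathrm{merge}=\emptyset$ for every $l\in\dblbr Q$, which simplifies the pairs $(g,l)$: for those indices the ``merge'' evaluation of the $\log(\sigma^2+\tau_0^2 n_{q,l})$ term reduces to $\log\sigma^2$ (absorbed into the $Q\log\sigma$ correction), while the $(X_{i,j}-\rho_0)^2$ and $S_{q,l}$ pieces vanish on the merge side and contribute only through the ``$-(\cdot)$'' term in the difference, giving the two positive sums $+\tfrac12\sum_{l\in\dblbr Q}\log(\sigma^2+\tau_0^2 n_{g,l})$, $+\tfrac{\tau_0^2}{2\sigma^2}\sum_l n_{g,l}^2 S_{g,l}/(\tau_0^2 n_{g,l}+\sigma^2)$, and the analogous $(X_{i,j}-\rho_0)^2$ sum. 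The remaining pairs $(h,l)$ with $l\neq g$ give the differences between merged and original quantities that appear in the statement, while pairs $(q,l)$ outside these indices drop out by observation (i).

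The main obstacle, as in the swap proof, will be bookkeeping rather than algebra: making sure the set $\mathcal{S}=\{(g,l):l\in\dblbr Q\}\cup\{(h,l):l\in\dblbr Q\setminus\{g\}\}$ of relevant index pairs under the symmetrized convention $q\leq l$ is correctly parametrized so that no term is double-counted and the $Q$ diagonal contributions from the emptied block are accounted for exactly once. Once that ledger is set up properly, each of the five pieces of $\log\mathcal{J}_{q,l}$ maps directly to one of the five lines in the claimed expression for $\Delta^\mathrm{noise}_{g \cup h}$, and the proof terminates.
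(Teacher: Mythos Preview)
Your proposal is correct and follows precisely the route the paper takes. The paper does not write out a separate proof for this proposition; it states the merge formula after recording the merge count statistics, treating it as the obvious analogue of the swap computation (Proposition~\ref{DeltaICLswap}, Case~2, since block $g$ becomes empty). Your plan---split into $\Delta^\mathrm{SBM}$ and $\Delta^\mathrm{noise}$, cite \cite{Come15} for the SBM piece, and for the noise piece subtract the explicit $\mathrm{ICL^{noise}}$ formula termwise using that only pairs in $\{(g,l)\}\cup\{(h,l):l\neq g\}$ change and that $n_{g,l}^\mathrm{merge}=0$, $I_{g,l}^\mathrm{merge}=\emptyset$---is exactly that computation. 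One small remark: the $\log(2\pi)$ contributions do not ``combine'' with the $\log\sigma$ term; they simply cancel in the difference because the total coefficient is $-N/2$, which is invariant under the merge. The $Q$ in the $\log\sigma$ coefficient comes solely from $N_{Q^\mathrm{merge}}-N_Q=-Q$.
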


\section{NSBM with unknown variances}
The test statistics proposed in   \cite{Liu2013} are asymptotically normal, but their  limit variances are  unknown. To deal with this case, we propose a natural extension of our model and an adaptation of the algorithm, both presented in this section. 

\subsection{Model with unknown variances}
The definition of the NSBM is the same as in Section 2.1 of the main paper except the last layer, which describes the blurring mechanism.
It is replaced with 
 	$$
	(X_{i,j})_{(i,j)\in \cA}\:|\:Z,A \sim
	\bigotimes_{(i,j)\in \cA}   (1-A_{i,j}) \mathcal{N}(0,1) + A_{i,j}  \mathcal N(\mu_{Z_i,Z_j},\sigma^2_{Z_i,Z_j}),
	$$
with additional unknown symmetric parameter matrix $  \sigma^2= (\sigma_{q,l}^2)_{q,l}\in\R_+^{Q\times Q}$.
In this case the  unknown   model  parameter   is given  by $\theta=(\pi,w,\mu,   \sigma^2)$.
 
\subsection{ICL criterion}
To define the ICL criterion in the model with unknown variances, one has to choose an appropriate  prior for $\theta=(\pi,w,\mu,   \sigma^2)$.
As previously, we use a factorized prior, that is, 
$\boldsymbol\pi_\theta(\pi, w, \mu,\sigma^2) = 
\boldsymbol\pi_\pi(\pi)\boldsymbol \pi_w(w) \boldsymbol\pi_{\mu,\sigma^2}(\mu,\sigma^2) $, where  $\boldsymbol\pi_\pi$ and $\boldsymbol \pi_w$ are the same priors as before. 
For the Gaussian parameters, we assume $\boldsymbol\pi_{\mu,\sigma^2}(\mu,\sigma^2) =\otimes_{q\leq l}\boldsymbol\pi_{\mu_{q,l},\sigma_{q,l}^2}(\mu_{q,l},\sigma_{q,l}^2) $.
 In Bayesian statistics,  when both the mean and the variance of a Gaussian distribution are unknown, it is common to use a normal-inverse Gamma (NIG) distribution as prior, since it is conjugate. 
 That is, $\boldsymbol\pi_{\mu_{q,l},\sigma_{q,l}^2}(m,s)=\boldsymbol\pi_{\mu_{q,l}\mid \sigma_{q,l}^2}(m\mid s)\boldsymbol\pi_{\sigma_{q,l}^2}(s)$, where $\boldsymbol\pi_{\sigma_{q,l}^2}$ is the inverse Gamma distribution $\mathrm{IG}(c_0,d_0)$ with hyperparameters $c_0,d_0$, that is, the prior density is given by
 $$
 \boldsymbol\pi_{\sigma_{q,l}^2}(z)=
 \frac{d_0^{c_0}}{\Gamma(c_0)} z^{-c_0-1}e^{-\frac{d_0}{z}},\quad  z>0,
 $$
 and 
 $\boldsymbol\pi_{\mu_{q,l}\mid \sigma_{q,l}^2}(\cdot\mid s)$ 
 is the Gaussian distribution $\cN\left(a_0,\frac{s}{b_0}\right)$ with hyperparameters $a_0, b_0$.
 We use the standard values for the  hyperparameters, which are $a_0=0$ and $b_0=c_0=d_0=1$.  

With this choice of the prior, the ICL has an analytical form stated in the following proposition.

\begin{proposition}  
	\label{ICL_NIG}
	The integrated complete data log-likelihood is given by
	$$\mathrm{ICL^{NSBM}}(X,A,Z) =   \log p(X, A, Z) =  \mathrm{ICL^{SBM}}(A,Z) + \mathrm{ICL^{noise}}(X,A,Z),$$
	with $ \mathrm{ICL^{SBM}}(A,Z)$ given by (2) in the main paper 
	and
	\begin{align*}
		\mathrm{ICL^{noise}}(X,A,Z) =&\log p(X | A, Z) 	\notag\\
		=& - \frac{N}{2}  \log(2\pi)
		 - N_Q\log\left(\frac{\Gamma(c_0)}{d_0^{c_0}\sqrt{b_0}}\right) - \frac{1}{2}\sum_{(i,j) \in \mathcal{A}} (1-A_{i,j})  X_{i,j}^2\\
		&+ \sum_{q \le l}\left\{ \log \Gamma\left(c_0+\frac{n_{q,l}}2\right)  -\frac12\log(b_0+n_{q,l})  -  
		\left(c_0+\frac{n_{q,l}}2\right) \log(d_{q,l})\right\},   \notag 
	\end{align*}
 with the notations of Proposition 1 
 in the main paper and	 
 $$
 d_{q,l}= d_0+ \frac{n_{q,l}S_{q,l}}{2}+\frac{n_{q,l}b_0}{2(b_0+n_{q,l})}(\overline{X}_{q,l}-a_0)^2,
 $$
 where $ \overline{X}_{q,l}$ denotes the sample mean of the observations $X_{i,j}$ with $(i,j)\in I_{q,l}$. 
\end{proposition}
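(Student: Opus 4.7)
\medskip
\noindent\textbf{Proof proposal.} The plan is to mirror the proof of Proposition~1 from the main paper: decompose $\log p(X,A,Z)=\log p(A,Z)+\log p(X\mid A,Z)$, where the first term produces $\mathrm{ICL^{SBM}}(A,Z)$ exactly as before (the SBM layer is untouched by the change of noise model), so that the work reduces to integrating the noise-layer likelihood against the new NIG prior.

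First I would factor the conditional likelihood of $X$ as in the known-variance proof,
\[
p(X\mid A, Z, \mu, \sigma^2) = \prod_{(i,j)\in\mathcal A}\left\{[f_{\mathcal N(0,1)}(X_{i,j})]^{1-A_{i,j}}\prod_{q\leq l}[f_{\mathcal N(\mu_{q,l},\sigma_{q,l}^2)}(X_{i,j})]^{A_{i,j}\mathcal Z_{q,l}^{i,j}}\right\}.
\]
The null-edge factors do not depend on $(\mu,\sigma^2)$ and contribute $-\tfrac12\sum_{(i,j)\in\mathcal A}(1-A_{i,j})X_{i,j}^2-\tfrac{N-M^A}{2}\log(2\pi)$ after taking logarithms. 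Because the NIG prior factorises across block pairs, the integration against $\boldsymbol\pi_{\mu,\sigma^2}$ splits into a product of per-block-pair marginal likelihoods $\mathcal J_{q,l}$, one for each pair with $q\leq l$.

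Next I would evaluate each $\mathcal J_{q,l}$ using NIG--Gaussian conjugacy. Performing the inner Gaussian integration in $\mu_{q,l}$ by completing the square (exactly as above, but now treating $\sigma_{q,l}^2$ as a free parameter) reduces the remaining integrand in $\sigma_{q,l}^2$ to an inverse-Gamma kernel with updated shape $c_0+n_{q,l}/2$ and updated scale $d_{q,l}$, where the algebraic identity $\sum_{(i,j)\in I_{q,l}}(X_{i,j}-a_0)^2=n_{q,l}S_{q,l}+n_{q,l}(\overline X_{q,l}-a_0)^2$ yields the compact form of $d_{q,l}$ stated in the proposition. Carrying out the $\sigma_{q,l}^2$ integral then gives the closed form
\[
\mathcal J_{q,l}=(2\pi)^{-n_{q,l}/2}\,\frac{\Gamma(c_0+n_{q,l}/2)}{\Gamma(c_0)}\,\frac{d_0^{c_0}}{d_{q,l}^{\,c_0+n_{q,l}/2}}\sqrt{\frac{b_0}{b_0+n_{q,l}}}.
\]

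Finally I would take logarithms, sum over the $N_Q$ pairs with $q\leq l$, and combine with the null-edge contribution. The per-pair $-\tfrac{n_{q,l}}{2}\log(2\pi)$ terms aggregate to $-\tfrac{M^A}{2}\log(2\pi)$ and merge with the null contribution into the single constant $-\tfrac{N}{2}\log(2\pi)$; the prior normaliser $\tfrac12\log b_0+c_0\log d_0-\log\Gamma(c_0)=-\log(\Gamma(c_0)/(d_0^{c_0}\sqrt{b_0}))$ appears $N_Q$ times; what remains is exactly the displayed sum over $q\leq l$. The main obstacle is purely bookkeeping: tracking the $\log(2\pi)$, $\log b_0$, $\log d_0$ and $\log\Gamma(c_0)$ constants through the $N_Q$ block pairs so they collapse correctly, and handling the square-completion plus sum-of-squares rewriting that produces the compact form of $d_{q,l}$. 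No new probabilistic ingredient is needed beyond standard NIG--Gaussian conjugacy.
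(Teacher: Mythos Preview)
Your proposal is correct and follows essentially the same route as the paper's own proof: split off the unchanged $\mathrm{ICL^{SBM}}$ part, factor the noise likelihood into null-edge terms and per-block-pair integrals $\mathcal J_{q,l}$, evaluate each $\mathcal J_{q,l}$ via NIG--Gaussian conjugacy (Gaussian integral in $\mu_{q,l}$ by completing the square, then inverse-Gamma integral in $\sigma_{q,l}^2$), and finally collect the $\log(2\pi)$ and prior-normaliser constants using $N=\sum_{q\le l}n_{q,l}+\sum_{(i,j)\in\mathcal A}(1-A_{i,j})$. The only cosmetic difference is that the paper writes out the intermediate integrals explicitly rather than invoking conjugacy by name, and obtains $d_{q,l}$ directly from the square-completion remainder rather than via the sum-of-squares identity you cite.
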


 \begin{proof}[Proof of Proposition \ref{ICL_NIG}]
Compared to the previous case, only  the term $\mathrm{ICL^{noise}} $ changes.
We have
\begin{align*}
	\mathrm{ICL^{noise}} (X,A,Z) &= \log \int_{\R_+^{N_Q}} \int_{ \R^{N_Q}} p(X|A, Z, { \mu}, \sigma^2 )\boldsymbol \pi({ \mu, \sigma^2})d{ \sigma^2}d{ \mu} \\
	&= \sum_{(i,j)\in\mathcal{A}} (1-A_{i,j}) \log (f_{\mathcal{N}(0,1)}(X_{i,j})) + \sum_{q\leq l } \log (\mathcal{J}_{q,l}),
\end{align*}
with, for all ${q\leq l }$,
$$
\mathcal{J}_{q,l} 
=  \int_{\R_+} \int_{ \R}   \left[\prod_{(i,j) \in I_{q,l}}f_{\mathcal{N}(\mu_{q,l},\sigma_{q,l}^2)}(X_{i,j}) \right]f_{\mathcal{N}(a_0, \sigma_{q,l}^2/b_0) }(\mu_{q,l}) f_{IG(c_0,d_0)}(\sigma_{q,l}^2) d\mu_{q,l} d\sigma_{q,l}^2.
$$
By  dropping subscripts $_{q,l}$ for readability, we obtain 
\begin{align*}
	\mathcal{J}_{q,l} 
	&=(2\pi)^{-\frac n2-\frac 12} \frac{\sqrt{b_0}d_0^{c_0}}{\Gamma(c_0)} \\
&\qquad\times \int_{\R_+} \int_{ \R} (\sigma^2)^{-(c_0+\frac{n}{2}+\frac{3}{2})}e^{-\frac{d_0}{\sigma^2}} \exp\left(-\frac{1}{2\sigma^2}\left(\sum_{  I}(X_{i,j}-\mu)^2+b_0(\mu-a_0)^2\right)\right)d\mu d\sigma^2\\
	 &= (2\pi)^{-\frac n2-\frac 12} \frac{\sqrt{b_0}d_0^{c_0}}{\Gamma(c_0)} \int_{\R_+} (\sigma^2)^{-c_0- \frac{n}2-\frac{3}{2}}e^{ -\frac{d }{\sigma^2}} 
	  \int_{ \R}   \exp\left\{-\frac{(b_0+n) \left(\mu-\bar a \right)^2}{2\sigma^2}\right\}d\mu d\sigma^2\\
	 &= (2\pi)^{-\frac n2-\frac 12} \frac{\sqrt{b_0}d_0^{c_0}}{\Gamma(c_0)} \int_{\R_+} (\sigma^2)^{-c_0- \frac{n}2-\frac{3}{2}}e^{ -\frac{d }{\sigma^2}} 
	 \sqrt{\frac{2\pi\sigma^2}{b_0+n}} \int_{ \R} f_{\mathcal N(\bar a, \frac{\sigma^2}{b_0+n})}(\mu) d\mu d\sigma^2\\
	 &= (2\pi)^{-\frac n2} \sqrt{\frac{b_0}{b_0+n}} \frac{d_0^{c_0}}{\Gamma(c_0)} \frac{\Gamma(c_0+\frac n2)}{d^{c_0+\frac n2}}\int_{\R_+} f_{IG(c_0+\frac n2, d)}(\sigma^2)
   d\sigma^2\\
	&=(2\pi)^{-\frac n2}\sqrt{\frac{b_0}{b_0+n}}\frac{\Gamma(c_0+\frac n2)}{\Gamma(c_0)}\frac{d_0^{c_0}}{d ^{c_0+\frac n2}},
		\end{align*}
		with
$\bar a = (n_{q,l}\overline{X}_{q,l}+a_0b_0)/(b_0+n_{q,l}).$ 
Moreover,
\begin{align*}
 \sum_{(i,j)\in\mathcal{A}} (1-A_{i,j}) \log (f_{\mathcal{N}(0,1)}(X_{i,j}))
 &=-\frac 12\log(2\pi) \sum_{(i,j)\in\mathcal{A}} (1-A_{i,j}) -\frac12 \sum_{(i,j)\in\mathcal{A}} (1-A_{i,j})X_{i,j}^2.
\end{align*}
Putting all terms together and as $N_Q=\frac{Q(Q+1)}2$ and  $N=\sum_{q\leq l } n_{q,l}+ \sum_{(i,j)\in\mathcal{A}} (1-A_{i,j})$,
the result  follows.
\end{proof}

\subsection{Greedy ICL inference algorithm}
To maximize the ICL criterion in the NSBM with unknown variances, Algorithm 1 from the main paper can still be used, but  the expression of the variation $\Delta_{g \rightarrow h} $ of the ICL when node $i^*$ is swapped from block $g$ to block $h$ is slightly changes. The decomposition 
\begin{equation*}
 			\Delta_{g \rightarrow h} = \Delta^\mathrm{SBM}_{g \rightarrow h}+\Delta^\mathrm{noise}_{g \rightarrow h},\end{equation*}
still holds, where $\Delta^\mathrm{SBM}_{g \rightarrow h}$ is given by Proposition \ref{DeltaICLswap}. Also the changes in the count statistics $n_q, n_{q,l}$ etc. are the same. The new expression of $\Delta^\mathrm{noise}_{g \rightarrow h}$ is given in the following proposition.

\begin{proposition}  
 		\label{DeltaICLswap_NIG}
According to two situations, the following expressions hold.
\paragraph{Case 1} If $n_g>1$, that is $i^*$ is not the only node assigned to block $g$, then 
 		\begin{align*}
 			\Delta^\mathrm{noise}_{g \rightarrow h} 
			=  &-\frac12\sum_{j \in\dblbr p} (A_{i^*,j} - A^\mathrm{swap}_{i^*,j})X_{i^*,j}^2 \\
			&+\sum_{(q,l)\in\mathcal S_{q,h}}\left\{\log\frac{\Gamma\left(c_0+\frac{n_{q,l}^\mathrm{swap}}2\right)}{\Gamma\left(c_0+\frac{n_{q,l}}2\right)} -\frac12\log\frac{b_0+n_{q,l}^\mathrm{swap}}{b_0+n_{q,l}}
			-\log\frac{(d_{q,l}^\mathrm{swap})^{c_0+n_{q,l}^\mathrm{swap}/2}}{(d_{q,l})^{c_0+n_{q,l}/2}}
			\right\}.
 		\end{align*}
 		
 		\paragraph{Case 2}  If $n_g=1$, that is, after the swap block $g$  is empty, so that block $g$ disappears from the model  and the number of blocks diminishes by 1, that is $Q^\mathrm{swap} = Q-1$, then 
 		\begin{align*}
 			\Delta^\mathrm{noise}_{g \rightarrow h}= &
			Q\log \frac{\Gamma(c_0)}{d_0^{c_0}\sqrt{b_0}} 
			-\frac12\sum_{j \in\dblbr p} (A_{i^*,j} - A^\mathrm{swap}_{i^*,j})X_{i^*,j}^2 \\
			&+\sum_{l\in\dblbr Q\backslash\{g\}}\left\{\log\frac{\Gamma\left(c_0+\frac{n_{h,l}^\mathrm{swap}}2\right)}{\Gamma\left(c_0+\frac{n_{h,l}}2\right)} -\frac12\log\frac{b_0+n_{h,l}^\mathrm{swap}}{b_0+n_{h,l}}
			-\log\frac{(d_{h,l}^\mathrm{swap})^{c_0+n_{h,l}^\mathrm{swap}/2}}{(d_{h,l})^{c_0+n_{h,l}/2}}
			\right\}\\
		&-	\sum_{l\in\dblbr Q}\left\{ \log \Gamma\left(c_0+\frac{n_{g,l}}2\right)  -\frac12\log(b_0+n_{g,l})  -  
		\left(c_0+\frac{n_{g,l}}2\right) \log(d_{g,l})\right\}.
 		\end{align*}
 	\end{proposition}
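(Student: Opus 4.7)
The plan is a direct computation: subtract the closed-form expression for $\mathrm{ICL^{noise}}(X,A,Z)$ provided by Proposition \ref{ICL_NIG} evaluated at the two configurations $(X, A^\mathrm{swap}, Z^\mathrm{swap})$ and $(X, A, Z)$, and then use the description of how the count statistics change under a single-node swap, which was already spelled out for the known-variance case and carries over unchanged here. No new integration is required once Proposition \ref{ICL_NIG} is in hand.

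I would decompose $\mathrm{ICL^{noise}}$ into four groups of terms: the constant $-\frac{N}{2}\log(2\pi)$, the $N_Q$ contribution $-N_Q\log(\Gamma(c_0)/(d_0^{c_0}\sqrt{b_0}))$, the quadratic term $-\frac{1}{2}\sum_{(i,j)\in\mathcal{A}}(1-A_{i,j})X_{i,j}^2$, and the sum over $q\le l$ gathering $\log\Gamma(c_0+n_{q,l}/2)$, $-\tfrac12\log(b_0+n_{q,l})$, and $-(c_0+n_{q,l}/2)\log d_{q,l}$. The constant cancels since $N$ is fixed. The quadratic term reduces to $-\tfrac12\sum_{j\in\dblbr p}(A_{i^*,j}-A^\mathrm{swap}_{i^*,j})X_{i^*,j}^2$ because $A$ and $A^\mathrm{swap}$ differ only on the $i^*$-th row and column; this gives the first line in both cases.

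For Case 1, since $n_g>1$, one has $Q^\mathrm{swap}=Q$ and hence $N_Q^\mathrm{swap}=N_Q$, so the $N_Q$ term contributes nothing. For the $q\le l$ sum, the key fact (recalled from the known-variance proof) is that $n_{q,l}^\mathrm{swap}=n_{q,l}$, $I_{q,l}^\mathrm{swap}=I_{q,l}$, and therefore $d_{q,l}^\mathrm{swap}=d_{q,l}$, for all pairs $(q,l)$ outside $\mathcal{S}_{g,h}$. The difference of the two sums then collapses to a sum indexed by $(q,l)\in\mathcal{S}_{g,h}$ of the three log-Gamma, half-log, and $d_{q,l}$-log differences, which is precisely the displayed formula.

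For Case 2, block $g$ disappears, so $Q^\mathrm{swap}=Q-1$ and $N_Q^\mathrm{swap}-N_Q=-Q$, producing the leading $Q\log(\Gamma(c_0)/(d_0^{c_0}\sqrt{b_0}))$. In the $q\le l$ sum I split the pairs into three families: those with $q,l\notin\{g,h\}$ are unchanged and cancel; those involving $h$ but not $g$ are present in both configurations with modified counts, producing the $\sum_{l\in\dblbr Q\backslash\{g\}}$ contribution; and those involving $g$ appear only in the original sum (since block $g$ is gone after the swap) and therefore contribute with a minus sign in the difference, producing the final $\sum_{l\in\dblbr Q}$ block. The only subtle step is the bookkeeping in Case 2, namely tracking exactly which pairs of blocks survive the swap and which disappear, and recording the sign of the $N_Q$ change; once this is handled the identity is a pure algebraic rewrite of Proposition \ref{ICL_NIG}.
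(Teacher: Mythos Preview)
Your proposal is correct and follows exactly the approach the paper takes: the paper does not spell out a separate proof for this proposition but remarks that the count-statistic bookkeeping is identical to the known-variance case and that one merely substitutes the closed form of $\mathrm{ICL^{noise}}$ from Proposition~\ref{ICL_NIG}; your four-term decomposition, the cancellation outside $\mathcal S_{g,h}$, and the $N_Q^\mathrm{swap}-N_Q=-Q$ accounting in Case~2 are precisely that computation made explicit.
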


Finally,   Algorithm \ref{MergeICL} to merge blocks can be applied where only the expression of the term $\Delta^\mathrm{noise}_{g \cup h}$ is changed. 

\begin{proposition} 
	\label{DeltaICLmerge_NIG}
When all nodes from block $g$ are moved to block $h$, then it holds
	\begin{align*}
		\Delta^\mathrm{noise}_{g \cup h}  
		=  & Q\log \frac{\Gamma(c_0)}{d_0^{c_0}\sqrt{b_0}} 
		-\frac12\sum_{(i,j)\in V^\mathrm{merge}} (A_{i,j} - A^\mathrm{merge}_{i,j})X_{i,j}^2 \\
					&+\sum_{l\in\dblbr Q\backslash\{g\}}\left\{\log\frac{\Gamma\left(c_0+\frac{n_{h,l}^\mathrm{merge}}2\right)}{\Gamma\left(c_0+\frac{n_{h,l}}2\right)} -\frac12\log\frac{b_0+n_{h,l}^\mathrm{merge}}{b_0+n_{h,l}}
			-\log\frac{(d_{h,l}^\mathrm{merge})^{c_0+n_{h,l}^\mathrm{merge}/2}}{(d_{h,l})^{c_0+n_{h,l}/2}}
			\right\}\\
		&-	\sum_{l\in\dblbr Q}\left\{ \log \Gamma\left(c_0+\frac{n_{g,l}}2\right)  -\frac12\log(b_0+n_{g,l})  -  
		\left(c_0+\frac{n_{g,l}}2\right) \log(d_{g,l})\right\}.
	\end{align*}
\end{proposition}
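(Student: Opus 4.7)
The plan is to compute $\Delta^\mathrm{noise}_{g \cup h}$ by direct substitution into the closed form of $\mathrm{ICL^{noise}}$ derived in Proposition \ref{ICL_NIG}, exploiting the fact that many terms cancel and the rest are controlled by the count statistics whose behaviour under merging is already spelled out in the text preceding the proposition (the $n_{q,l}^\mathrm{merge}$, $I_{q,l}^\mathrm{merge}$, $V^\mathrm{merge}$ bookkeeping).

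First, I would write out $\mathrm{ICL^{noise}}(X,A,Z)$ and $\mathrm{ICL^{noise}}(X,A^\mathrm{merge},Z^\mathrm{merge})$ using Proposition \ref{ICL_NIG}, and subtract them term by term. The $-\tfrac{N}{2}\log(2\pi)$ term is identical in both, since $N=|\cA|$ is determined by the underlying index set and is unaffected by merging. The constant involving $N_Q$ changes because the merge drops the number of blocks from $Q$ to $Q-1$, so $N_Q-N_Q^\mathrm{merge}=\tfrac{Q(Q+1)}{2}-\tfrac{(Q-1)Q}{2}=Q$, which produces the prefactor $Q\log\bigl(\Gamma(c_0)/(d_0^{c_0}\sqrt{b_0})\bigr)$. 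The quadratic term $-\tfrac12\sum_{(i,j)\in\cA}(1-A_{i,j})X_{i,j}^2$ yields $-\tfrac12\sum_{(i,j)}(A_{i,j}-A_{i,j}^\mathrm{merge})X_{i,j}^2$, and since $A$ and $A^\mathrm{merge}$ coincide outside $V^\mathrm{merge}$ by construction, the sum restricts to $V^\mathrm{merge}$.

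The main bookkeeping effort lies in handling the sum $\sum_{q\le l}$ of the per-block terms $T_{q,l}:=\log\Gamma(c_0+n_{q,l}/2)-\tfrac12\log(b_0+n_{q,l})-(c_0+n_{q,l}/2)\log d_{q,l}$. I would partition the unordered pairs into three groups: (i) pairs with $q,l\notin\{g,h\}$, for which the count statistics, $d_{q,l}$, and hence $T_{q,l}$ are invariant by the merge rules, so they cancel; (ii) pairs of the form $\{h,l\}$ for $l\ne g$, which become pairs in the post-merge model and contribute $T^\mathrm{merge}_{h,l}-T_{h,l}$; (iii) all pairs $\{g,l\}$ (including $\{g,g\}$ and $\{g,h\}$), which exist only before the merge, and contribute $-T_{g,l}$. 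The algebraic identity $\sum_{l\ne g}[T^\mathrm{merge}_{h,l}-T_{h,l}]-\sum_{l}T_{g,l}$ reproduces the two block-indexed sums appearing in the statement of the proposition.

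The only mildly delicate point is verifying that group (i) actually cancels: I would check that for $q,l\notin\{g,h\}$, the set $I_{q,l}$, the integer $n_{q,l}$, the sample mean $\overline X_{q,l}$ and the empirical variance $S_{q,l}$ are all unchanged by the merge (this is stated explicitly in the bookkeeping paragraph), hence $d_{q,l}^\mathrm{merge}=d_{q,l}$ and $T_{q,l}^\mathrm{merge}=T_{q,l}$. Once this is established the remaining work is purely assembling the surviving pieces; I do not foresee any real analytic obstacle, only the need to keep careful track of which pair-indexed terms belong to which of the three groups above, and in particular not to miss the self-pair $\{g,g\}$ and the cross-pair $\{g,h\}$ in group (iii), which are what makes the second sum run over $l\in\dblbr Q$ rather than $l\in\dblbr Q\setminus\{g\}$.
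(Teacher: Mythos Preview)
Your proposal is correct and follows exactly the template the paper uses for the analogous results (e.g.\ the proof of Proposition~\ref{DeltaICLswap}): take the difference of the closed-form $\mathrm{ICL^{noise}}$ from Proposition~\ref{ICL_NIG}, use that $N$ is unchanged while $N_Q$ drops by $Q$, restrict the quadratic sum to $V^\mathrm{merge}$ since $A$ and $A^\mathrm{merge}$ agree elsewhere, and observe that block-pair terms with $q,l\notin\{g,h\}$ cancel. The paper in fact states this proposition without an explicit proof, so there is nothing further to compare.
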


\section{Numerical results}

\begin{figure}[t]
	\includegraphics[scale=1]{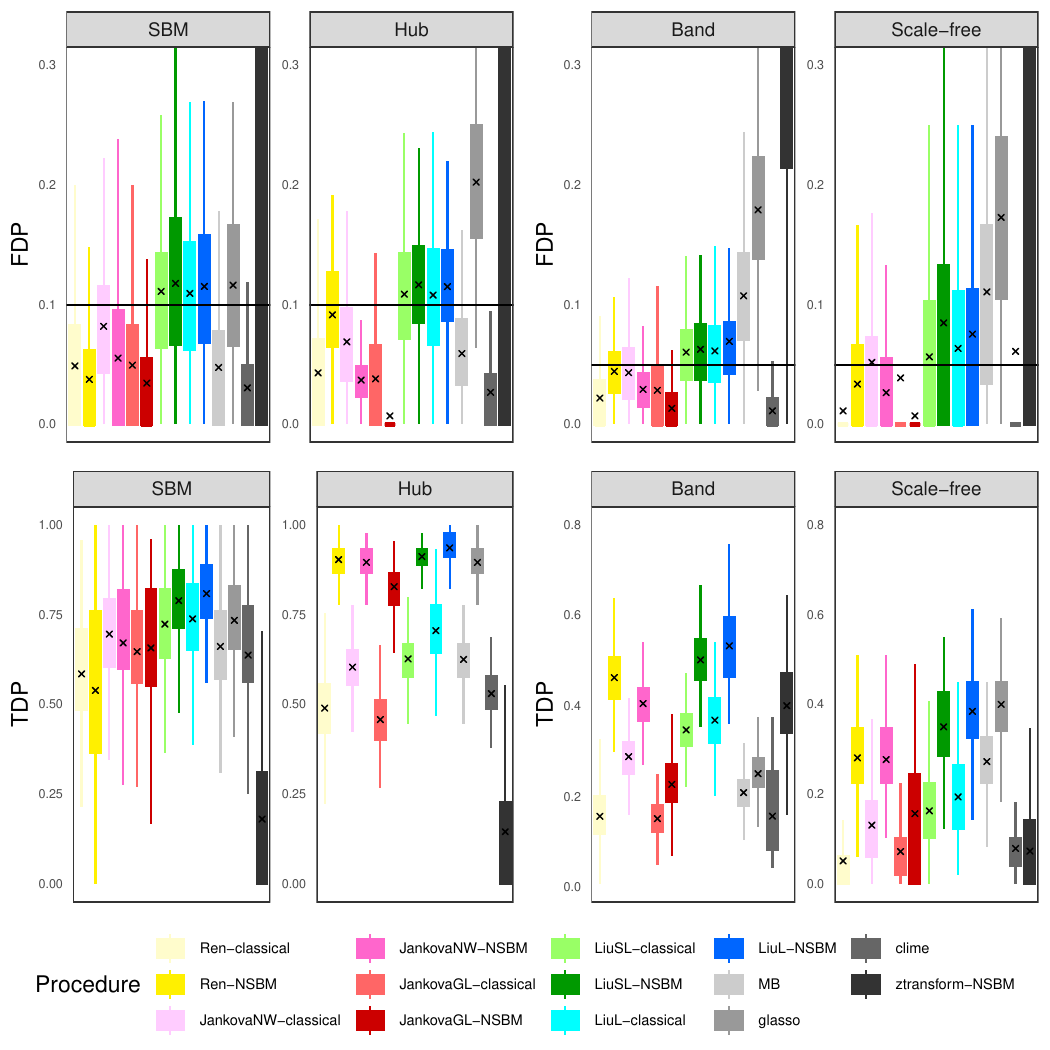}	
	\centering
	\caption{Boxplots of the FDP and  TDP for all procedures described in Section 4.2  of the main paper 
	for the settings SBM, hub, band and scale-free for 200 simulated datasets, where $p=50$ and $n=100$. Horizontal lines  represent the nominal level $\alpha$, with $\alpha=0.1$   in  the SBM and the hub setting 
and $\alpha=0.05$ in the band and scale-free cases.
  The crosses in the boxplots correspond to the sample FDR  (resp. TDR) defined as the mean of the FDP (resp.  TDP).
} 
	\label{fig:simp50}
\end{figure}

\begin{figure}[t]
	\includegraphics[scale=1]{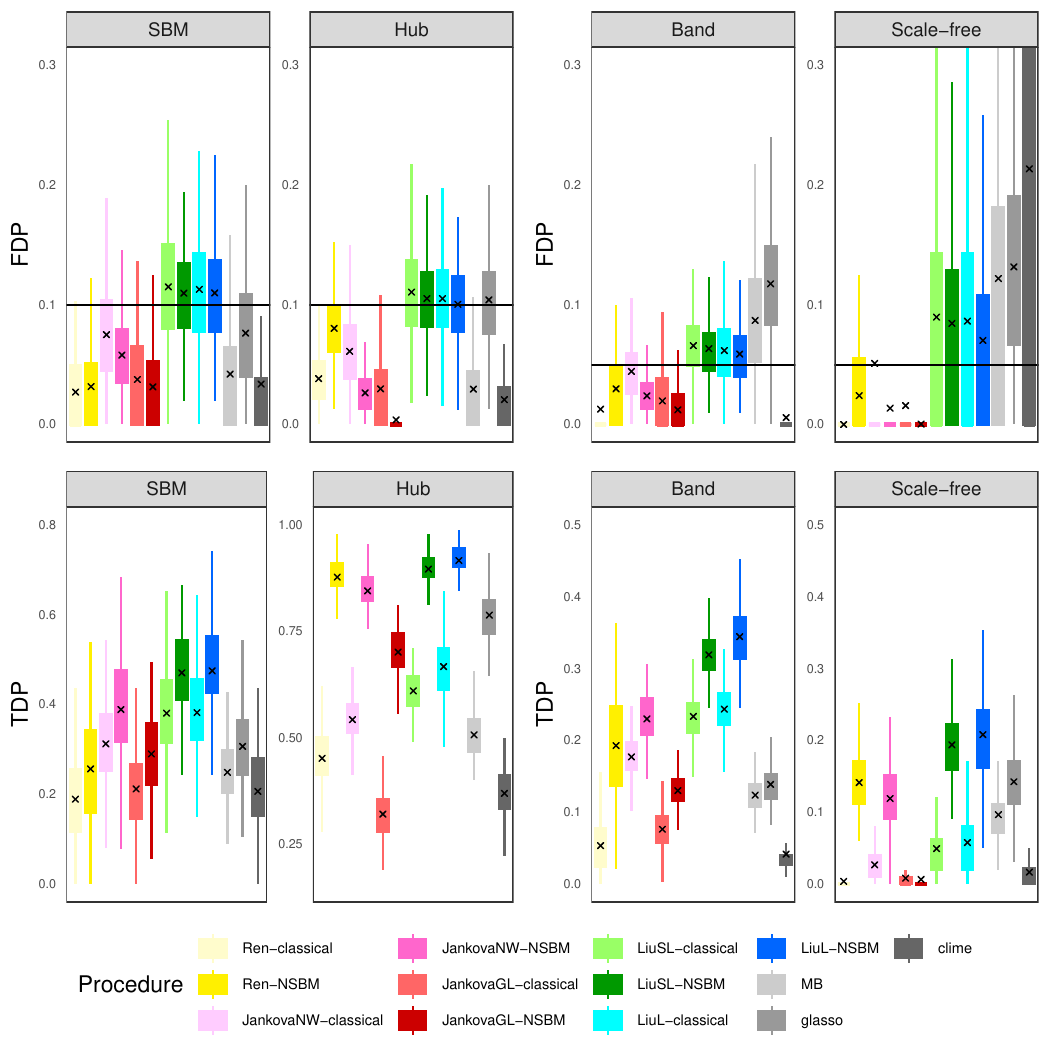}	
	\centering
	\caption{Boxplots of the FDP and  TDP for all procedures described in Section 4.2  of the main paper 
	for the settings SBM, hub, band and scale-free for 200 simulated datasets, where $p=100$ and $n=100$. Horizontal lines  represent the nominal level $\alpha$, with $\alpha=0.1$   in  the SBM and the hub setting 
and $\alpha=0.05$ in the band and scale-free cases.
  The crosses in the boxplots correspond to the sample FDR  (resp. TDR) defined as the mean of the FDP (resp.  TDP).
} 
	\label{fig:simp100}
\end{figure}

To complete the numerical study of Section 5.1 in the main paper,   this section provides   results for settings with 
    $p<n$  (Figure \ref{fig:simp50}) and  $p=n$   (Figure \ref{fig:simp100}).    The \texttt{ztransform-NSBM} procedure
    corresponds to the NSBM approach applied to test statistics computed from the inverse of the sample covariance matrix \citep{Anderson} and is only applicable when $n$ is larger than $p$.
    The conclusions are roughly the same as in the more difficult setting with $p>n$. Namely, in general we observe a significant improvement by applying the NSBM approach over the classical multiple testing approach for all test statistics. 

\bibliographystyle{Chicago}

\bibliography{biblio}